\def\nb0{{\mathbf{0}}}
\def\nb1{{\mathbf{1}}}
\def\ncalT{{\mathcal{T}}}
\def\nrmd{{\rm d}}
\newtheorem{lemma}{Lemma}
\newtheorem{ndef}{Definition}
\newtheorem{theorem}{Theorem}
\newtheorem{cor}{Corollary}
\newtheorem{example}{Example}
\newtheorem{remark}{Remark}
\def\E{\mathbb{E}}
\def\P{\mathbb{P}}
\def\pc{\mathtt{P_c}}
\def\R{\mathbb{R}}
\def\T{\beta}							% Threshold = \beta_i
\def\sinr{\mathtt{SINR}}			% Signal to interference plus noise ratio
\def\sir{\mathtt{SIR}}
\def\calA{\mathcal{A}}
\def\calK{\mathcal{K}}
\def\calT{\mathcal{T}}
\def\calB{\mathcal{B}}
\begin{document}
\graphicspath{{./Figures/}}

\title{Load-Aware Modeling and Analysis of Heterogeneous Cellular Networks}

\author{Harpreet S. Dhillon,~\IEEEmembership{Student Member, IEEE}, Radha Krishna Ganti,~\IEEEmembership{Member, IEEE}, \\ and Jeffrey G. Andrews,~\IEEEmembership{Fellow, IEEE}\\
\thanks{Manuscript received April 4, 2012; revised October 12, 2012; accepted January 4, 2013. The associate
editor coordinating the review of this paper and approving it for
publication was L. Cai.}
\thanks{This work was supported by NSF grant CIF-1016649. A part of this paper was presented at IEEE Globecom 2012 in Anaheim, CA~\cite{DhiGanC2012}.}
\thanks{
H. S. Dhillon and J. G. Andrews are with the Wireless Networking and Communications Group (WNCG), The University of Texas at Austin (e-mail: dhillon@utexas.edu, jandrews@ece.utexas.edu).}
\thanks{R. K. Ganti is with the Department of EE, Indian Institute of Technology Madras, Chennai, India (e-mail: rganti@ee.iitm.ac.in).}
%\thanks{%
%Digital Object Identifier 10.1109/TWC.2013.13.120485}
}

\maketitle

%\markboth{IEEE Transactions on Wireless Communications, accepted for publication}{Dhillon \MakeLowercase{\textit{et al.}}: Load-Aware Modeling and Analysis of Heterogeneous Cellular Networks}

%\pubid{1536-1276/13\$31.00~\copyright~2013 IEEE}

%\pubidadjcol

\begin{abstract}
Random spatial models are attractive for modeling heterogeneous cellular networks (HCNs) due to their realism, tractability, and scalability.  A major limitation of such models to date in the context of HCNs is the neglect of network traffic and load: all base stations (BSs) have typically been assumed to always be transmitting.  Small cells in particular will have a lighter load than macrocells, and so their contribution to the network interference may be significantly overstated in a fully loaded model. This paper incorporates a flexible notion of BS load by introducing  a new idea of {\em conditionally thinning} the interference field.  For a $K$-tier HCN where BSs across tiers differ in terms of transmit power, supported data rate, deployment density, and now load, we derive the coverage probability for a typical mobile, which connects to the strongest BS signal. Conditioned on this connection, the interfering BSs of the $i^{th}$ tier are assumed to transmit independently with probability $p_i$, which models the load.  Assuming -- reasonably -- that smaller cells are more lightly loaded than macrocells, the analysis shows that adding such access points to the network always increases the coverage probability.  We also observe that fully loaded models are quite pessimistic in terms of coverage.
\end{abstract}

\begin{keywords}
Heterogeneous cellular networks, load-aware model, Poisson point process, stochastic geometry, HetNet performance analysis.
\end{keywords}

\section{Introduction}
\PARstart{A}{dvances} in hardware and the increasing popularity of smartphones and tablets have led to a paradigm shift in the way cellular networks are accessed and consequently the way they are deployed. In terms of network access, focus has shifted from voice-oriented applications towards data-hungry applications such as live video streaming and symmetric video calls~\cite{CisM2012}. Macrocell based conventional cellular networks were primarily designed to provide coverage and are clearly not capable of accommodating this huge change in the usage trends~\cite{QuaM2011}.
 One of the most promising ways to handle this data deluge is to increase the density of the BSs, thereby reducing the frequency reuse distance and hence improving network capacity~\cite{AndClaJ2012}. For example, a typical $3$G or $4$G cellular network already has operator-managed picocells deployed at the hot spots and cell edges~\cite{KisGreJ2003}; distributed antennas deployed to eliminate coverage dead-zones~\cite{RohPauC2003}; and low-power user-deployed femtocells~\cite{AndClaJ2012}; along with the existing high-power tower-mounted traditional macrocell BSs that guarantee universal coverage.

\subsection{Related Work and Motivation}
The rapidly increasing heterogeneity and randomness in cellular networks threaten classical models, such as the hexagonal grid~\cite{RapB2002} or Wyner model~\cite{WynJ1994}, with obsolescence.  Clearly, a more sensible way to model a HCN is with a random spatial model, where the BS locations form a realization of some random spatial point process~\cite{BroJ2000,BacZuyBC1997,AndBacJ2011,FleStoM1997}.  Such a model captures the inevitable uncertainty in their locations, and tools from stochastic geometry~\cite{StoKenB1995} and point process theory~\cite{KinB1993} can be deployed to assist in analysis~\cite{BacBlaB2009}.

\pubidadjcol

For example, in an HCN, macrocells would usually follow a somewhat sparse point process and have high transmit power, whereas pico and femtocells are drawn from successively denser (more BSs/area) processes, and have lower transmit power.  In such a network, a mobile user could simply connect to the strongest base station signal, with the rest of the transmitting BSs being interferers.  This model was introduced in \cite{DhiGanC2011,DhiGanJ2012} and extended in \cite{MukJ2012,JoSanJ2012,DhiGanC2011a}, and is surprisingly tractable: under fairly benign assumptions, the coverage probability could be derived in closed-form, which is not possible even for 1-tier networks in the hexagonal grid model.  The model further was shown to generally agree in several important ways with more sophisticated industry (e.g. 3GPP) simulations \cite{DamMonJ2011} and even early field deployments of HCNs \cite{GhoAndJ2012}.

Despite this encouraging progress, these models lack in at least one important aspect, which is their neglect of network traffic and load. Rather, the work to date in this direction has assumed that all the BSs transmit concurrently all the time, which translates to a fully loaded (or full buffer) scenario resulting in pessimistic estimates of coverage and average rate. Although this might be justified for macrocells in peak traffic hours, this is not applicable for smaller cells whose smaller coverage areas will naturally accommodate fewer users, even if considerable biasing towards the small cells is introduced. Therefore, the main goal of this paper is to incorporate a notion of BS load. Those familiar with random spatial models will recognize that a simple independent thinning of the point processes will not capture the load since it may also turn off the serving BS, which is not allowed if the analysis is performed for a typical active user. On the other hand, incorporating more sophisticated queueing models in the present multi cell scenario will render the analysis intractable due to the interference induced coupling in the service rates of various BSs~\cite{RenVecJ2011,BorHegJ2009}. Moreover, this line of thought is not in the scope of the current paper since we do not focus on the flow level performance evaluation. The readers interested in flow level models can refer to~\cite{BorC2003,BonProC2003}. With our main focus on the downlink coverage evaluation, we propose a middle way whereby we conditionally thin the interference field predicated on a connection to a typical active user, and we are able to maintain acceptable tractability with a realistic model of BS loading.

\subsection{Contributions and Outcomes}
The main contributions of this paper are as follows:
\subsubsection{Tractable Load Model for $K$-Tier HCNs} In Section~\ref{sec:sysmod}, we incorporate a notion of BS load in a general $K$-tier random spatial model for HCNs. For an HCN where BSs across tiers differ in terms of their transmit power, supported data rate and deployment density, we assume that a typical mobile connects to the strongest BS in terms of received power and conditioned on this connection, the $i^{th}$ tier interfering BSs transmit independently with a probability $p_i$, which models the load. These BS activity factors $\{p_i\}$ may vary significantly across the tiers due to different coverage areas of each tier. %In general, the activity factors of different BSs are temporally and spatially correlated due to interference and mobility. While temporal correlation does not show up in the analysis performed at a randomly chosen time instant, we show that the spatial dependence is also weak due to the smaller number of active users schedules in smaller cells. Therefore, the independence assumption holds reasonably well in the HCNs.
\subsubsection{Coverage Probability} We derive exact expressions for the coverage probability of a typical mobile user in both open and closed access HCNs. Since these expressions involve an infinite summation, we also derive a set of upper and lower bounds that can be made arbitrarily tight with a finite number of terms. These bounds also give insights into the number of terms of the infinite summation required to approximate the coverage probability such that the approximation error is within some predefined limit.
\subsubsection{Design Insights} This paper provides some potentially useful design insights for HCNs. First, we study the effect of proposed ``conditional thinning'' on the coverage footprints of various tiers and show that this effect can be understood in two equivalent ways: i) thinning of interference, and ii) biasing of the typical mobile towards its serving BS. While the former is a direct result of thinning, the latter is an indirect consequence of the expansion of the coverage regions in the thinned interference field. %[Uncomment for the final version] As a completely unrelated applications, this observation can be used to correct the edge-bias problem in random spatial models, which appears when a typical mobile is defined to be a randomly chosen point in $\R^2$, as discussed in detail in [DhiLetter].

Second, our analysis sheds light into the effect of adding new tiers to already existing HCNs. In particular, we derive an exact condition under which the addition of a new tier to a general $K$-tier HCN will increase the overall coverage probability. A relevant special case is the addition of small cells to existing macrocell networks, where we show that in the interference limited regime the overall open access coverage probability increases if the load on small cells is smaller than that of macrocells, which is a typical operating scenario because of the smaller loads handled by small cells. This is a strong rebuttal to the viewpoint that unplanned infrastructure might bring down a cellular network due to increased interference.

Third, we show that the coverage probability for a general $K$-tier interference-limited open-access network is invariant to changes in the power and deployment density when all the classes of BSs have same loads and target $\sinr$s. Furthermore, this coverage probability is also the same as that of a single tier network with the same target $\sinr$ and the same BS activity factor.

\section{System Model}
\label{sec:sysmod}
We model a downlink heterogeneous cellular network with $K$ classes (or tiers) of BSs. For notational simplicity, we denote the set $\{1, 2, \ldots K\}$ by $\calK$. BSs of the $i^{th}$ class transmit with power $P_i$, have a target $\sinr$ of $\T_i$ and are assumed to form a realization of an independent homogeneous Poisson Point Process (PPP) $\Phi_i$ with density $\lambda_i$. Such a model seems sensible for user deployed BSs such as femtocells but is dubious for the centrally planned tiers such as macrocells. Nevertheless, the difference is not as large as expected and PPP assumption for macrocells is shown to be about as accurate as the grid model when compared to an actual $4$G network in~\cite{AndBacJ2011}. More recently,~\cite{TayDhiC2012} has validated the PPP assumption for certain cities using tools from spatial statistics. Furthermore, this model is likely even more sensible for $K$-tier HCNs due to the increased uncertainty in the deployment of lower tiers (smaller cells). We will comment more on the accuracy of this assumption in the light of the proposed load model in the Numerical Results Section.

Without loss of generality, we perform analysis on a typical mobile user located at origin, which is made possible by Slivnyak's Theorem~\cite{StoKenB1995}. For cell association, we consider the max-$\sinr$ connectivity model, where a mobile user connects to the BS that provides highest downlink $\sinr$. It should be noted that this model is the same as the max-power connectivity model where a mobile connects to the BS that provides highest downlink power. Since HCNs are typically interference-limited~\cite{BouPanJ2009}, we ignore thermal noise for notational simplicity. However, as would be evident from the analysis, this assumption can be relaxed without much extra work. To model the wireless channel, we consider a standard distance based path loss with exponent $\alpha>2$ along with Rayleigh fading. Hence the received power at a typical mobile from a BS located at point $x \in \Phi_i$ can be expressed as $ P_i h_x \|x\|^{-\alpha}$, where $h_x \sim \exp(1)$ and $ \|x\|^{-\alpha}$ is the distance based path loss. General fading distributions, e.g., log-normal shadowing, can be incorporated using techniques developed in~\cite{BacMuhJ2009} at the cost of tractability. Assuming $\mathcal{Z}_k$ to be the set of $k^{th}$ tier interfering BSs (possibly thinned version of $\Phi_k$), the downlink $\sir$ at the typical mobile user when it connects to the BS located at point $y \in \Phi_i$ is:
\begin{equation}
\sir(y) = \frac{P_i h_{y} \|y\|^{-\alpha}}{\sum_{k=1}^K \sum_{x \in \mathcal{Z}_k} P_k h_{x} \|x\|^{-\alpha} }.
\end{equation}

\subsection{Modeling Base-Station Load}
In this $K$-tier random spatial model, we now incorporate network ``load'' perceived by each BS as the likelihood of its transmission at a randomly chosen time instant. This can also be visualized as the {\em BS activity factor}, formally defined as the fraction of time for which a BS transmits.

\subsubsection*{Relationship of BS activity factor with number of active users}
A BS is inactive in a particular resource block, e.g., time-frequency resource block in LTE~\cite{GhoZhaB2010}, if there is no active user scheduled. This can be due to an over provisioned system or a momentary lull in traffic due to the bursty nature of data access. %\cite{FroMelJ1994, AdaJ1997, HefLucJ1986}.
Clearly, this model characterizes  the load on each BS in terms of the total number of active users served by that BS at a random time instant. In the context of Orthogonal Frequency Division Multiple Access (OFDMA) if a particular BS experiences high load, it will utilize more frequency time resources and hence the probability that a user is scheduled in a particular frequency time block increases. Therefore, the load perceived by a BS is directly related to the likelihood that an arbitrary resource block is utilized and hence is related to the BS activity in that particular block.

\subsubsection*{Temporal and spatial correlation in BS activity factors} In general, there is both temporal and spatial correlation in the activity factors of different BSs. Temporal correlation is induced across neighboring BSs by the mobility of users, i.e., if a user is associated to a particular BS, the likelihood of neighboring BSs transmitting at a future time instant is slightly higher. Spatial correlation is induced by interference and traffic/load patterns~\cite{RenVecJ2011,BorHegJ2009}. To understand this, consider two neighboring BSs. When the first BS transmits, it increases net interference experienced by the second BS and hence reduces its data rate. As a result, the second BS now takes longer to transmit same amount of data than it would have taken if the first BS was not transmitting. Therefore, the activity factors of these two BSs are positively correlated. However, modeling the exact nature of these correlations is beyond the scope of the current paper and we assume the BS activity factors to be independent. Although the spatio-temporal correlations haven't yet been modeled for this exact problem, it is worth noting that they have been handled in some related setups, e.g., the effect of spatio-temporal correlations of interference on coverage is discussed in~\cite{GanHaeJ2009,SchBetJ2012}.

\subsection{Proposed Load Model and Mathematical Preliminaries}
We assume that a typical mobile connects to the strongest BS in terms of received power and conditioned on this connection, the interferer belonging to the $i^{th}$ tier transmits independently with a probability $p_i$ and is idle with a probability $1-p_i$. This conditioning makes it harder to analyze this system model since we do not have {\em a priori} knowledge about the serving BS and hence it is not possible to isolate the interference field. To overcome this, we partition each tier $\Phi_m$ independently into two sets of BSs $\Psi_m$ and $\Delta_m$, where $\Psi_m$ and $\Delta_m$ are both independent PPPs with densities $p_m\lambda_m$ and $(1-p_m)\lambda_m$. The set $\Psi_m$ represents the set of active BSs of tier $m$ with the possibility of one of them being a serving BS, and $\Delta_m$ represents the set of idle BSs of tier $m$ with an exception that it could also contain the serving BS since partitioning was done independently. The advantage of this partitioning is that the interferers are confined to the set $\Psi = \bigcup_{m\in \calK} \Psi_m$. For ease of notation, we define the maximum signal strength from a set of nodes $\calA$ as
\begin{equation}
M(\calA) = \sup_{x\in \calA}P_\calA h_x \|x\|^{-\alpha},
\end{equation}
and the total received power at the origin from the set of active BSs as:
\begin{equation}
I= \sum_{i=1}^K \sum_{x\in \Psi_i}P_i h_x\|x\|^{-\alpha},
\end{equation}
which denotes the net interference power if $\Psi$ does not include the serving BS and the interference plus signal power if it includes the serving BS. From the definition of $M(\Psi_i)$ and $I$, it is easy to see that $\nb1 \left(\frac{M(\Psi_i)}{I-M(\Psi)}<\beta_i\right) = 1$ only if no active BS  in the set $\Psi_i$  can connect to the mobile. Similarly, $\nb1 \left(\frac{M(\Delta_i)}{I}<\beta_i\right) = 1$ only if no BS in the set $\Delta_i$ is able to connect to the mobile. The second event is defined to cover the possibility that a serving BS may lie in the set $\Delta_i$. Using these two events, we will now define the coverage probability of a typical mobile at the origin. We note that a mobile will be in outage (not in coverage) if none of the BSs in the whole network provides $\sir$ that is greater than the corresponding target for that tier.
\begin{ndef}[Coverage Probability]
Coverage probability, $\pc$, can be formally defined as:
%Denoting coverage probability as $\pc$, it can now be formally expressed as:
\begin{equation}
\pc = 1-\E\left[\prod_{i\in \calK}\nb1 \left(\frac{M(\Psi_i)}{I-M(\Psi)}<\beta_i\right)\nb1 \left(\frac{M(\Delta_i)}{I}<\beta_i\right)\right].
\label{eq:coverage}
\end{equation}
\end{ndef}
\noindent For this definition, we implicitly assumed an open access network where a mobile user is allowed to connect to any BS in the network without any restrictions. Another possible access strategy is closed access or closed subscriber group strategy in which a mobile is allowed to connect to only a subset $\calB \subseteq \calK$ of all the tiers. Coverage probability for closed access is also given by \eqref{eq:coverage} with the only difference that the product is over the set $\calB$ instead of $\calK$.

For tractability, we assume that the target $\sir$ thresholds $\beta_i$ are  greater than $0$ dB, i.e., $\beta_i>1$, $\forall\  i$. This is in fact the case for a large fraction of mobile users and only a few edge users might violate this assumption. Moreover, in the Numerical Results Section we show that the results derived under this weaker assumption hold down until around $-2$dB which covers a large fraction of cell edge users as well. This assumption has also been validated earlier for the fully loaded $K$-tier HCN in~\cite{DhiGanJ2012}. The reason why this assumption is helpful is because it  ensures that at most one BS in the active set $\Psi$ meets the target $\sir$ requirements for a typical mobile user. Refer to~\cite{DhiGanJ2012} for a detailed discussion on this assumption and its application in coverage analysis of a fully loaded $K$-tier HCN.
%\begin{lemma}
%Given positive real numbers $\{a_1, a_2 \ldots a_k\}$ representing the received power from each BS at the typical mobile user and defining $b_i = \frac{a_i}{\sum\limits_{j \neq i} a_j}$, which corresponds to the $\sir$ of the $i^{th}$ BS, at most one $b_i$ can be greater than $1$.
%\label{Lemma1}
%\end{lemma}

\subsection{Coverage Regions}
\begin{figure*}[t]
\centering
\includegraphics[width=0.32\textwidth]{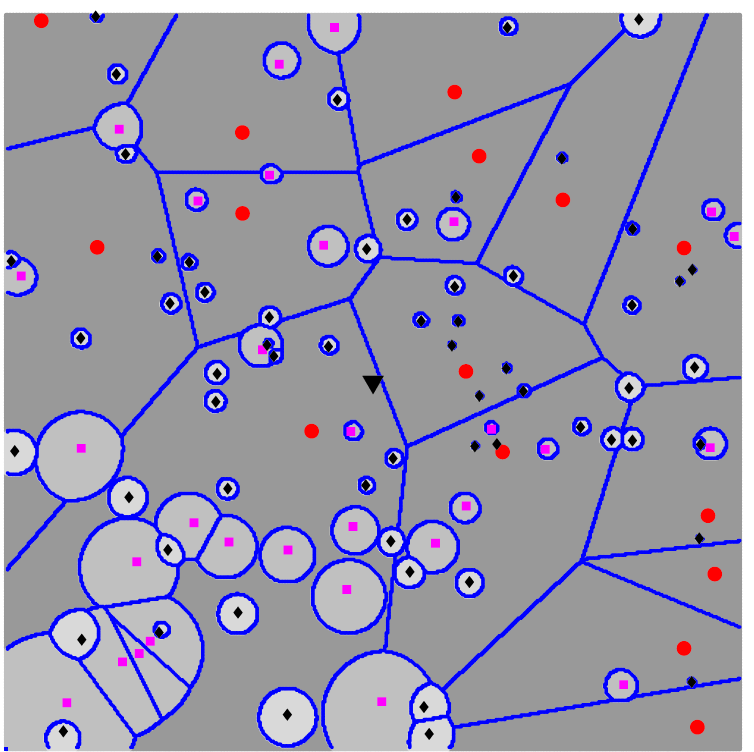}
\includegraphics[width=0.32\textwidth]{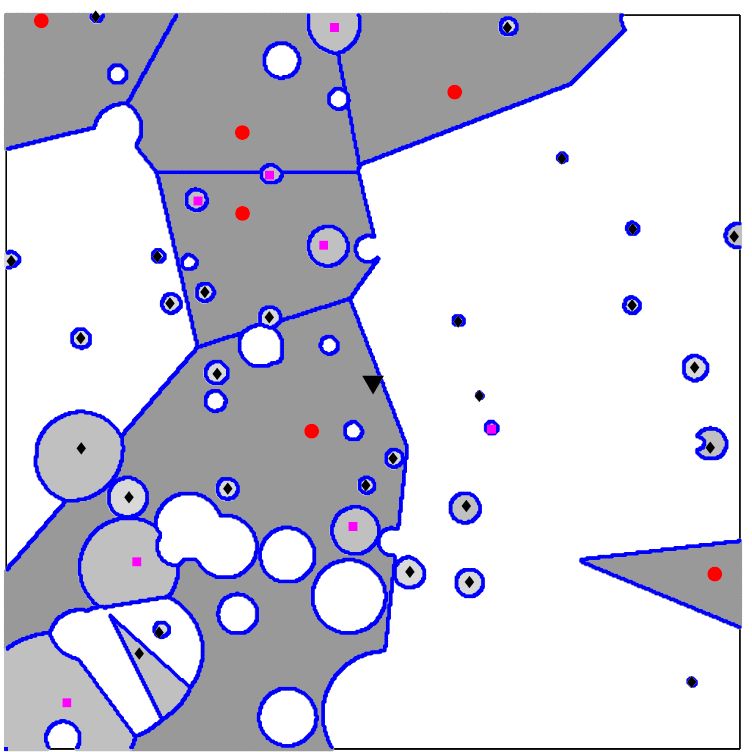}
\includegraphics[width=0.32\textwidth]{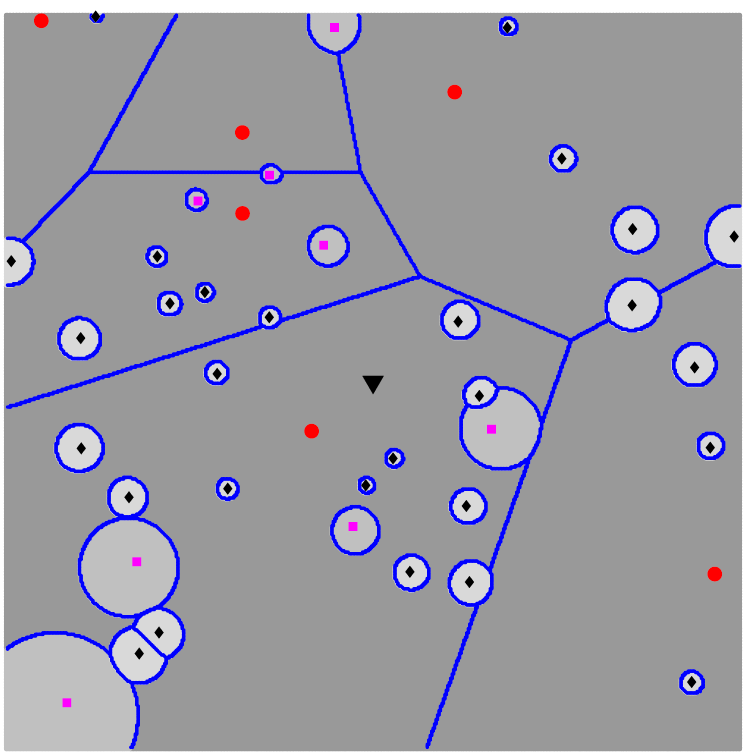}

\caption{Illustration of the proposed load model in a realization of a three-tier network with $\lambda_2 = 2\lambda_1$, $\lambda_3 = 4\lambda_1$, $P_1 = 100P_2$, $P_1 = 1000P_3$, $p_1 = .6$ and $p_2 = p_3 = .4$. The big circles, squares, small diamonds and big triangle, respectively represent macrocells, picocells, femtocells and a typical mobile.}
%\caption{Coverage regions in a realization of a three-tier network with $\lambda_2 = 2\lambda_1$, $\lambda_3 = 4\lambda_1$, $P_1 = 100P_2$ and $P_1 = 1000P_3$. The big circles represent macrocells, squares represent picocells, small diamonds represent femtocells and big triangle represents typical mobile. The left figure depicts fully loaded system and the other two have the interference field thinned by $p_1 = .6$ and $p_2 = p_3 = .4$. To highlight the removal of certain interferers, their original coverage regions are removed as well in the middle figure. The coverage regions in the right figure are redrawn based only on the active set of interferers. %It highlights that the typical mobile is now ``biased'' towards its serving BS and the new coverage regions are enlarged due to thinning of the interference field.}
\label{fig:3tierZout}
\end{figure*}

Before going into the detailed analysis of coverage probability, it will be useful to understand the effect of the proposed load model on the coverage footprints of various BSs. Consider a realization of a three tier HCN in Fig.~\ref{fig:3tierZout}. We first plot the coverage regions assuming a fully loaded network by tessellating the space according to max-$\sir$ connectivity model in the left figure. Clearly, this plot does not resemble a classical Voronoi tessellation due to the differences in the transmit powers of BSs across tiers. Moreover, it should be noted that the ``cell edges'' are not as sharp in reality due to fading and shadowing, which are averaged out for these illustrative plots. The effect of incorporating the proposed load model on coverage footprints can now be understood in two equivalent ways:  i) thinning of the interference field conditional on the connection of a typical mobile to its serving BS, where the original coverage regions corresponding to the inactive BSs are removed to highlight conditional thinning (middle figure), ii) biasing of a typical mobile towards its serving BS relative to the new cell edge defined by the set of active BSs (right figure). While the former is a direct result of conditional thinning, the latter is an indirect consequence of the expansion of coverage regions in the thinned interference field.

% First, it leads to the thinning of the interference field conditional on the connection of a typical mobile to its serving BS, as shown in the center figure. In this plot, the original coverage regions corresponding to the inactive BSs are removed to highlight the conditional thinning. It should be noted that the connection of a typical mobile to its strongest BS is always ensured since the coverage cell in which a typical mobile lies is never removed. Another way of understanding this model is by observing that the conditional thinning of interference ``biases'' the typical mobile towards the interior of the coverage cell in which it lies, i.e., biases it towards its serving BS. To highlight this effect, we remove the inactive interferers from the center figure and plot new coverage regions for the remaining transmitters in the right figure by tessellating the space according to max-$\sir$ connectivity model as was done for the fully loaded system. This biasing is clearly evident in the right figure, where a typical mobile was originally located at the ``cell edge'' for the fully loaded system but is pushed to the cell interior due to the expansion of coverage regions resulting from conditional thinning.
%%%%%%% [Letter Ref] This observation is important not just from load modeling perspective but also for edge-bias correction in random spatial models for cellular networks, as discussed in detail in~[DhiLetter].

\section{Coverage Probability}
This is the main technical section of this paper where we derive the probability that a typical mobile is in coverage under the system model introduced in the last section. We first derive coverage probability for an open access network, from which the results for closed-access immediately follow.

\subsection{Exact Expression for Coverage Probability}

We start by stating the Laplace transform of $I$, i.e., $\mathcal{L}_I(s) = \E \left[\exp(-sI) \right]$, in Lemma~\ref{Lemma2}, which will be useful in the derivation of coverage probability. The proof is given in~\cite{DhiGanJ2012}.
\begin{lemma}
\label{Lemma2}

The Laplace transform of $I$ can be expressed as:
\begin{equation}
\mathcal{L}_{I}(s) = \exp \left( - s^\frac{2}{\alpha} C(\alpha) \sum\limits_{l=1}^K p_l \lambda_l P_l^\frac{2}{\alpha}  \right),
\end{equation}
where $C(\alpha)$ is given by:
\begin{equation}
C(\alpha) = \frac{2 \pi^2 \csc \left(\frac{2 \pi}{\alpha} \right)}{\alpha}.
\end{equation}
\end{lemma}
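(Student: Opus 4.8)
The plan is to compute $\mathcal{L}_I(s) = \E[\exp(-sI)]$ directly by exploiting the fact that $I$ is a sum of contributions from $K$ independent marked Poisson point processes. First I would use the independence of the thinned processes $\Psi_1,\ldots,\Psi_K$ to factor the Laplace transform as a product over tiers:
\begin{equation}
\mathcal{L}_I(s) = \prod_{i=1}^K \E\left[\exp\left(-s\sum_{x\in\Psi_i}P_i h_x\|x\|^{-\alpha}\right)\right].
\end{equation}
Within each tier the fading coefficients $\{h_x\}$ are i.i.d.\ $\exp(1)$ marks independent of the point locations, so I would condition on $\Psi_i$ and average over the marks first, using $\E_h[e^{-th}] = (1+t)^{-1}$ with $t = sP_i\|x\|^{-\alpha}$. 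This reduces each factor to $\E\bigl[\prod_{x\in\Psi_i}(1+sP_i\|x\|^{-\alpha})^{-1}\bigr]$.

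The next step is to apply the probability generating functional of a homogeneous PPP: for a process of intensity $p_i\lambda_i$ on $\mathbb{R}^2$ and measurable $f$, one has $\E[\prod_x f(x)] = \exp(-p_i\lambda_i\int_{\mathbb{R}^2}(1-f(x))\,dx)$. Taking $f(x) = (1+sP_i\|x\|^{-\alpha})^{-1}$ yields
\begin{equation}
\E\left[\prod_{x\in\Psi_i}\frac{1}{1+sP_i\|x\|^{-\alpha}}\right] = \exp\left(-p_i\lambda_i\int_{\mathbb{R}^2}\frac{sP_i\|x\|^{-\alpha}}{1+sP_i\|x\|^{-\alpha}}\,dx\right).
\end{equation}
I would then pass to polar coordinates, $dx = 2\pi r\,dr$, and rewrite the radial integrand as $(1+r^\alpha/(sP_i))^{-1}$.

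The main computational step is the radial integral. Substituting $v = r^\alpha/(sP_i)$ turns it into $\tfrac{2\pi}{\alpha}(sP_i)^{2/\alpha}\int_0^\infty v^{2/\alpha-1}(1+v)^{-1}\,dv$, and the remaining integral is the standard Beta-function evaluation $\int_0^\infty v^{a-1}(1+v)^{-1}\,dv = \pi\csc(\pi a)$ with $a = 2/\alpha$. This is exactly where the hypothesis $\alpha > 2$ is essential: it guarantees $0 < 2/\alpha < 1$, which is precisely the range in which the integral converges (the upper tail requires $2/\alpha < 1$ and the behaviour at the origin requires $2/\alpha > 0$). Collecting constants gives $C(\alpha) = 2\pi^2\csc(2\pi/\alpha)/\alpha$ and a per-tier factor $\exp(-p_i\lambda_i C(\alpha)s^{2/\alpha}P_i^{2/\alpha})$; multiplying the $K$ factors and pulling the common $s^{2/\alpha}C(\alpha)$ out of the sum produces the claimed expression. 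I expect the only real subtlety to be the convergence bookkeeping in the radial integral and the justification of interchanging expectation and product, both of which are handled cleanly by the independence of the marks and the PGFL.
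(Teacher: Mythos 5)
Your proposal is correct: the computation (factoring over the independent tiers $\Psi_i$, averaging the $\exp(1)$ fading marks to get $\bigl(1+sP_i\|x\|^{-\alpha}\bigr)^{-1}$, applying the PGFL with intensity $p_i\lambda_i$, and evaluating the radial integral via $\int_0^\infty v^{2/\alpha-1}(1+v)^{-1}\,\nrmd v = \pi\csc(2\pi/\alpha)$, valid exactly because $\alpha>2$) is the standard argument and matches the proof the paper defers to in~\cite{DhiGanJ2012}, with the thinned densities $p_l\lambda_l$ in place of $\lambda_l$. No gaps.
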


The following Lemma deals with fractional moments of interference and is the main technical result required to evaluate the coverage probability for this model.
\begin{lemma}
\label{lem:one}
Let $\Psi_i$ denote the set of active transmitters of tier $i$ and $\delta_i=\beta_i/(1+\beta_i)$. Let $I$ denote the total received power from the BSs in the set $\Psi$ and for notational simplicity define $\calT = \nb1 \left(\max\limits_{i\in\calK}\frac{M(\Psi_i)}{\delta_i}<I\right) I^{-2/\alpha}$. Then
\[\E\left[\calT^m\right]= \frac{m!g(m)}{(-A)^m},\]
where
\begin{align}
g(m)=\left(-\frac{A}{\eta}\right)^m
\left\{\frac{1}{\Gamma(1+\frac{2m}{\alpha})}-
\frac{B}{\eta}\frac{\pi\Gamma(1+\frac{2}{\alpha})}{\Gamma(1+\frac{(m+1)2}{\alpha})}\right\},
\label{eq:gm}
\end{align}
and
\begin{align}
A&=\pi\Gamma\left(1+\frac{2}{\alpha}\right)\sum_{l\in \calK}(1-p_l)\lambda_l P_l^\frac{2}{\alpha}\beta_l^{-\frac{2}{\alpha}}, \label{eq:A}\\
B&= \sum_{i\in\calK}\frac{\lambda_ip_iP_i^\frac{2}{\alpha} \beta_i^{-\frac{2}{\alpha}}{}_2F_1(1,\frac{2m}{\alpha},1+\frac{(m+1)2}{\alpha},\frac{1}{1+\beta_i})}{(1+\beta_i)^\frac{2m}{\alpha}} ,   \label{eq:B}\\
\eta&=C(\alpha)\sum_{l=1}^K p_l\lambda_lP_l^{\frac{2}{\alpha}}. \label{eq:eta}
\end{align}
The hypergeometric function is denoted by
${}_2F_1(a,b,c,z)=\frac{\Gamma(c)}{\Gamma(b)\Gamma(c-b)}\int_0^1\frac{t^{b-1}(1-t)^{c-b-1}}{(1-tz)^a}\nrmd t$.
\end{lemma}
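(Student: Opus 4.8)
The plan is to build on two facts. By Lemma~\ref{Lemma2}, $I$ is a one-sided stable variable with $\mathcal{L}_I(s)=\exp(-\eta s^{2/\alpha})$, so all negative fractional moments are available in closed form through the Gamma integral $I^{-\nu}=\frac{1}{\Gamma(\nu)}\int_0^\infty s^{\nu-1}e^{-sI}\,\nrmd s$, which gives $\E[I^{-2k/\alpha}]=k!/(\eta^k\Gamma(1+2k/\alpha))$. Second, the standing assumption $\beta_i>1$ forces $\delta_i>1/2$, so no two distinct active BSs can simultaneously exceed their coverage thresholds. I would first rewrite $\calT^m=\nb1(E)\,I^{-2m/\alpha}$ with $E=\{M(\Psi_i)<\delta_iI\ \forall i\}$ and split $\nb1(E)=1-\nb1(E^c)$, so that
\[
\E[\calT^m]=\E[I^{-2m/\alpha}]-\E\!\left[\nb1(E^c)\,I^{-2m/\alpha}\right].
\]
The first summand is immediately $m!/(\eta^m\Gamma(1+2m/\alpha))$, which is exactly the first term inside the braces of \eqref{eq:gm} after the factor $\eta^{-m}$ is extracted.

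For the correction term I would use $\delta_i>1/2$ to write the covered event as a sum of mutually exclusive single-BS events,
\[
\nb1(E^c)=\sum_{i\in\calK}\sum_{x\in\Psi_i}\nb1\!\left(P_ih_x\|x\|^{-\alpha}\ge\delta_iI\right),
\]
since at most one summand can be nonzero. Applying the Campbell--Mecke formula tier by tier and invoking Slivnyak's theorem, conditioning on an active point of tier $i$ with received power $s$ leaves a residual interference $I'$ that is independent of $s$ and distributed as $I$; moreover $I=s+I'$ and the condition $P_ih_x\|x\|^{-\alpha}\ge\delta_iI$ becomes the $\sir$ condition $s\ge\beta_iI'$. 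Passing to the one-dimensional received-power PPP of tier $i$, whose intensity is proportional to $s^{-2/\alpha-1}\,\nrmd s$ with prefactor $\tfrac{2}{\alpha}\pi\Gamma(1+\tfrac{2}{\alpha})p_i\lambda_iP_i^{2/\alpha}$, the correction becomes a double integral over $s$ and over the law of $I'$.

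I would then swap the order of integration: for fixed $I'=y$ the $s$-integral runs over $[\beta_iy,\infty)$, and after the rescaling $s=yu$ it factors completely as $y^{-2(m+1)/\alpha}$ times
\[
J_i=\int_{\beta_i}^\infty u^{-2/\alpha-1}(1+u)^{-2m/\alpha}\,\nrmd u,
\]
leaving $\int_0^\infty y^{-2(m+1)/\alpha}f_I(y)\,\nrmd y=\E[I^{-2(m+1)/\alpha}]$, which the stable law evaluates to $(m+1)!/(\eta^{m+1}\Gamma(1+2(m+1)/\alpha))$. The one genuinely delicate computation is $J_i$: the substitution $u\mapsto\beta_i/w$ followed by $w\mapsto1-t$ recasts it as the Euler integral $\frac{\Gamma(b)\Gamma(c-b)}{\Gamma(c)}\,{}_2F_1(a,b,c,z)$ with $a=2m/\alpha$, $b=1$, $c=1+2(m+1)/\alpha$, $z=1/(1+\beta_i)$. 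Using the symmetry ${}_2F_1(a,b,\cdot,\cdot)={}_2F_1(b,a,\cdot,\cdot)$, this reproduces exactly the hypergeometric factor together with the powers $\beta_i^{-2/\alpha}$ and $(1+\beta_i)^{-2m/\alpha}$ that define $B$ in \eqref{eq:B}.

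Collecting the pieces, the spatial sum $\sum_i\big(\tfrac{2}{\alpha}\pi\Gamma(1+\tfrac{2}{\alpha})p_i\lambda_iP_i^{2/\alpha}\big)J_i$ equals $\pi\Gamma(1+\tfrac{2}{\alpha})\,B/(m+1)$, so the correction term becomes $m!\,B\,\pi\Gamma(1+\tfrac{2}{\alpha})/(\eta^{m+1}\Gamma(1+(m+1)2/\alpha))$; hence $\E[\calT^m]$ equals $m!/\eta^m$ times the expression in braces in \eqref{eq:gm}. Multiplying and dividing by $(-A)^m$ (which cancels, $A$ serving only as the normalizing constant through which $g(m)$ later enters the coverage-probability series for the idle field $\Delta$) yields the stated $\E[\calT^m]=m!\,g(m)/(-A)^m$. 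I expect the main obstacles to be the careful Palm-theoretic reduction of the covered-BS term to a single tagged point and the identification of $J_i$ with ${}_2F_1$ with matching of all Gamma factors; the remaining algebra is routine.
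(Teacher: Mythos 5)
Your proposal is correct, and it reaches \eqref{eq:gm} by a genuinely different route from the paper's Appendix~\ref{app:Proof-of-Lemma-main}. The paper never manipulates moments of $I$ directly: it substitutes the Gamma representation $I^{-2m/\alpha}=\frac{1}{\Gamma(2m/\alpha)}\int_0^\infty s^{2m/\alpha-1}e^{-sI}\,\nrmd s$ at the outset, so everything reduces to the joint transform $\E\big[e^{-sI}\,\nb1\big(\max_i M(\Psi_i)/\delta_i<I\big)\big]$; after the same mutual-exclusivity step you use (at most one covering BS when $\beta_i>1$), it conditions on the Rayleigh fading of the tagged BS, inserts the Laplace transform of the residual interference $I'$, applies Campbell--Mecke in $\R^2$, and then a chain of substitutions ($x\to(sP_i)^{-1/\alpha}x$, exchange of the $s$- and $x$-integrals, a Gamma integral in $s$, and $1+\frac{\beta_i}{1+\beta_i}\|x\|^\alpha\to t^{-1}$) lands on Euler's integral for ${}_2F_1$. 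You instead project each tier onto its one-dimensional received-power PPP --- your intensity $\frac{2}{\alpha}\pi\Gamma(1+\frac{2}{\alpha})p_i\lambda_iP_i^{2/\alpha}s^{-2/\alpha-1}$ is right, and consistent with \eqref{eq:eta} because $\pi\Gamma(1+\frac{2}{\alpha})\Gamma(1-\frac{2}{\alpha})=C(\alpha)$ --- apply Mecke/Slivnyak on that line so the tagged power is independent of $I'\stackrel{d}{=}I$, and exploit the scaling $s=yu$ to factor the correction term into $\E[I^{-2(m+1)/\alpha}]\cdot J_i$, invoking the stable-law identity $\E[I^{-2k/\alpha}]=k!/(\eta^k\Gamma(1+\frac{2k}{\alpha}))$ once per term. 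Your bookkeeping is verifiably exact: $J_i=\frac{\alpha}{2(m+1)}\beta_i^{-2/\alpha}(1+\beta_i)^{-2m/\alpha}\,{}_2F_1(1,\frac{2m}{\alpha};1+\frac{2(m+1)}{\alpha};\frac{1}{1+\beta_i})$, the factor $\frac{1}{m+1}$ converts $(m+1)!$ into $m!$, the tier sum reproduces $B$ of \eqref{eq:B}, and $(-A)^m$ indeed cancels, $A$ being pure bookkeeping at this stage. What your route buys: the fading is absorbed once and for all into the displacement-mapped process (no conditional fading expectation is needed), every interchange is Tonelli on nonnegative integrands rather than a Fubini requiring justification, and the structure of \eqref{eq:gm} --- two fractional moments of the same stable law, of orders $2m/\alpha$ and $2(m+1)/\alpha$ --- becomes transparent. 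What the paper's route buys: it stays in physical space using only the standard Laplace-transform toolkit, so it extends immediately when thermal noise is restored ($\E[e^{-sI}]$ merely acquires a factor $e^{-s\sigma^2}$ inside the $s$-integral), whereas your closed-form moment identities lean on the pure-interference form $\mathcal{L}_I(s)=\exp(-\eta s^{2/\alpha})$ and would have to fall back on the Gamma representation in that case.
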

\begin{proof}
See Appendix \ref{app:Proof-of-Lemma-main}.
\end{proof}
Using these Lemmas, we now derive the main coverage probability result.
\begin{theorem}[Open Access]
\label{thm:main}
The downlink coverage probability for a typical mobile user in a $K$-tier open access network assuming $\T_i>1,\ \forall\ i,$ is
\begin{align}
\pc=& \frac{\pi}{C(\alpha)} \frac{\sum\limits_{i\in\calK} p_i \lambda_iP_i^{2/\alpha} \T_i^{-2/\alpha}}{\sum_{i=1}^K p_i \lambda_i P_i^{2/\alpha}}-\sum_{m=1}^\infty g(m),
\end{align}
\end{theorem}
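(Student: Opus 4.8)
The plan is to start from the definition \eqref{eq:coverage} and separate the product over tiers into two pieces that become independent after conditioning: an ``active-field'' factor $\prod_{i\in\calK}\nb1(M(\Psi_i)/(I-M(\Psi))<\beta_i)$ depending only on $\Psi$, and an ``idle-field'' factor $\prod_{i\in\calK}\nb1(M(\Delta_i)/I<\beta_i)$ depending on $\Delta$ and on $I$. First I would reduce the active-field factor to the indicator appearing in $\calT$. For any tier $j$ not containing the globally strongest active BS, its peak signal $M(\Psi_j)$ is one of the summands of $I-M(\Psi)$, so $M(\Psi_j)/(I-M(\Psi))\le 1<\beta_j$ and that indicator is automatically $1$; only the tier $i^\star$ attaining $M(\Psi)=M(\Psi_{i^\star})$ contributes a nontrivial factor $\nb1(M(\Psi)/(I-M(\Psi))<\beta_{i^\star})$. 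The same $\beta_i>1$ argument shows that for $j\neq i^\star$ one has $M(\Psi_j)<\delta_j I$ automatically, so the active-field factor coincides with $\nb1(\max_{i}M(\Psi_i)/\delta_i<I)$, i.e. exactly the indicator multiplying $I^{-2/\alpha}$ in the definition of $\calT$.

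Next I would dispose of the idle-field factor by conditioning on the active process $\Psi$ (hence on $I$) and using the independence of $\Delta$. Conditioned on $I$, the idle factor equals the probability that no point $y$ of $\Delta=\bigcup_i\Delta_i$ satisfies $P_i h_y\|y\|^{-\alpha}>\beta_i I$. Because each $\Delta_i$ is an independent PPP of intensity $(1-p_i)\lambda_i$ with i.i.d.\ $\exp(1)$ marks, the qualifying idle BSs form a Poisson process whose mean, computed by Campbell's theorem, is $\pi\Gamma(1+2/\alpha)\sum_i(1-p_i)\lambda_i P_i^{2/\alpha}\beta_i^{-2/\alpha}\,I^{-2/\alpha}=A\,I^{-2/\alpha}$ with $A$ as in \eqref{eq:A}. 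The void probability therefore yields $\E\big[\prod_i\nb1(M(\Delta_i)/I<\beta_i)\mid\Psi\big]=\exp(-A\,I^{-2/\alpha})$.

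Combining the two steps, the expectation in \eqref{eq:coverage} becomes $\E[\nb1(\max_i M(\Psi_i)/\delta_i<I)\exp(-A I^{-2/\alpha})]$. I would then expand the exponential as $\sum_{m\ge 0}\frac{(-A)^m}{m!}I^{-2m/\alpha}$, interchange summation and expectation (justified by absolute summability of $\sum_m \frac{A^m}{m!}\E[\calT^m]=\sum_m|g(m)|$, which the closed form in Lemma~\ref{lem:one} makes transparent), and recognize each term as $\frac{(-A)^m}{m!}\E[\calT^m]=g(m)$ by Lemma~\ref{lem:one}. Hence the expectation equals $\sum_{m\ge0}g(m)$ and $\pc=1-\sum_{m\ge0}g(m)$. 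Finally I would evaluate $g(0)$ from \eqref{eq:gm}: at $m=0$ one has $1/\Gamma(1)=1$ and ${}_2F_1(1,0,\cdot,\cdot)=1$, so $B$ reduces to $\sum_i p_i\lambda_i P_i^{2/\alpha}\beta_i^{-2/\alpha}$ and, with $\eta$ from \eqref{eq:eta}, $1-g(0)=\frac{\pi}{C(\alpha)}\big(\sum_i p_i\lambda_i P_i^{2/\alpha}\beta_i^{-2/\alpha}\big)/\big(\sum_i p_i\lambda_i P_i^{2/\alpha}\big)$, precisely the leading term of the theorem; peeling $g(0)$ off the sum gives $\pc=\tfrac{\pi}{C(\alpha)}(\cdots)-\sum_{m\ge1}g(m)$.

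The main obstacle is the reduction in the first step: one must argue carefully that, under $\beta_i>1$, the product of per-tier indicators with the common denominator $I-M(\Psi)$ collapses to the single event ``the strongest active BS fails'' and simultaneously matches the self-excluding form $\{M(\Psi_i)<\delta_i I\ \forall i\}$ built into $\calT$; the key observation making this work is that a non-maximal tier's peak is itself part of $I-M(\Psi)$. Everything downstream---the Poissonized void probability and the series manipulation---is routine once Lemma~\ref{lem:one} is invoked, so the conceptual weight lies in connecting \eqref{eq:coverage} to the fractional-moment quantity $\calT$.
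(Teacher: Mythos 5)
Your route is essentially the paper's: the paper likewise conditions on the common interference $I$, collapses the idle-field product to $\exp(-AI^{-2/\alpha})$ via the PGFL of each $\Delta_i$ (your Campbell/void-probability computation is the same calculation in different clothing), expands the exponential, interchanges sum and expectation, and invokes Lemma~\ref{lem:one} termwise. Your first step, making explicit that under $\beta_i>1$ the product $\prod_i\nb1\bigl(M(\Psi_i)/(I-M(\Psi))<\beta_i\bigr)$ equals $\nb1\bigl(\max_i M(\Psi_i)/\delta_i<I\bigr)$ because a non-maximal tier's peak signal is itself a summand of $I-M(\Psi)$, is actually spelled out more carefully than in the paper, which leans implicitly on the $\beta_i>1$ discussion of Section~II; likewise your justification of the interchange via summability of $|g(m)|$ matches the paper's footnote together with its post-theorem bound $|g(m)|\le (A/\eta)^m/\Gamma(1+\tfrac{2m}{\alpha})$.

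The one genuine gap is your treatment of the $m=0$ term. Lemma~\ref{lem:one} does not cover $m=0$: its proof rests on the representation $I^{-2m/\alpha}=\Gamma(2m/\alpha)^{-1}\int_0^\infty e^{-sI}s^{-1+2m/\alpha}\,\nrmd s$, which is valid only for $m\ge 1$, and read literally at $m=0$ (with $\calT^0\equiv 1$) the lemma would assert $g(0)=\E[\calT^0]=1$, contradicting your own evaluation of $g(0)$ from \eqref{eq:gm}. What you actually need is the separate identity
\begin{equation}
\P\Bigl(\max_{i\in\calK}\frac{M(\Psi_i)}{\delta_i}<I\Bigr)=g(0)=1-\frac{\pi}{C(\alpha)}\,\frac{\sum_{i\in\calK} p_i\lambda_i P_i^{2/\alpha}\T_i^{-2/\alpha}}{\sum_{i\in\calK} p_i\lambda_i P_i^{2/\alpha}},
\end{equation}
i.e., that $1-g(0)$ is the coverage probability of a \emph{fully loaded} $K$-tier network with thinned densities $p_i\lambda_i$ --- which is exactly what the paper imports from \cite{DhiGanJ2012} after splitting off the $m=0$ term in \eqref{eq:786}, rather than deriving it from Lemma~\ref{lem:one}. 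The fix is short: either cite that result, or note that under $\beta_i>1$ at most one active BS can cover the typical user, so $\P\bigl(\max_i M(\Psi_i)/\delta_i\ge I\bigr)=\sum_{i\in\calK}\E\bigl[\sum_{x\in\Psi_i}\nb1(\sir(x)>\beta_i)\bigr]$, and a Campbell-theorem computation (equivalently, setting $s=0$ in the Laplace-functional identity used in Appendix~\ref{app:Proof-of-Lemma-main}) yields the stated value. With that patch your argument is complete and coincides with the paper's proof.
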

\begin{proof}
The coverage probability is given by \eqref{eq:coverage}. Since the point processes $\Delta_i$ and the corresponding fading random variables are independent, conditioning on the common interference, we can move the expectation inside the product. Hence
\begin{equation}
1-\pc = \E\left[\prod_{i=1}^K\nb1 \left(\frac{M(\Psi_i)}{I-M(\Psi)}<\beta_i\right)\E\left[ \nb1 \left(M(\Delta_i)<\beta_iI\right)\right] \right],
\end{equation}
where the inner expectation is with respect to the inactive transmitter sets. We first simplify this inner expectation as follows:
\begin{align}
&\E\left[ \nb1 \left(M(\Delta_i)<\beta_iI\right)\right] \nonumber \\
&= \E \left[\prod_{x \in \Delta_i} \nb1\left(P_i h \|x\|^{-\alpha} < \T_i I \right)\right]\\
&\stackrel{(a)}{=} \E \left[ \prod_{x \in \Delta_i} \left(1 - \exp\left( -\T_i P_i^{-1} I \|x\|^{\alpha} \right) \right)  \right]\\
&\stackrel{(b)}{=} \exp\left( -(1-p_i) \lambda_i \int_{x\in \R^2} \exp\left(-\T_i P_i^{-1} I \|x\|^{\alpha}  \right) \nrmd x\right)\\
&\stackrel{(c)}{=} \exp\left(-(1-p_i) \lambda_i \T_i^{-\frac{2}{\alpha}} I^{-\frac{2}{\alpha}} P_i^{\frac{2}{\alpha}} \pi \Gamma\left(1+ {\frac{2}{\alpha}} \right)\right),
\end{align}
where $(a)$ follows form the fact that fading is Rayleigh distributed, i.e., $h \sim \exp(1)$, $(b)$ follows from the probability generating functional (PGFL) of PPP~\cite{StoKenB1995} and $(c)$ follows from some algebraic manipulations to reduce the integral to a Gamma function. Now recalling the expression of $A$ given by \eqref{eq:A}, we can write:
\begin{equation}
1-\pc=\E \left[ \nb1 \left(\max_{i\in \calK}\frac{M(\Psi_i)}{\delta_i}<I\right)\exp(-A I^{-2/\alpha})\right].
\end{equation}
Using the Taylor series expansion of $\exp(-x)$,
%\[1-\pc=\E\nb1 \left(\max_i\frac{M(\Psi_i)}{\delta_i}<I\right)\sum_{m=0}^\infty\frac{(-A)^m I^{-2m/\alpha}}{m!}.\]
exchanging the infinite summation and expectation\footnote{The average of the series is absolutely convergent. },
\[1-\pc=\sum_{m=0}^\infty\frac{(-A)^m}{m!}\E\left[\nb1 \left(\max_{i\in\calK}\frac{M(\Psi_i)}{\delta_i}<I\right) I^{-2m/\alpha}\right].\]
The summation can be split as:
\begin{equation}
1-\pc=\P\left(\max_{i\in \calK}\frac{M(\Psi_i)}{\delta_i}<I\right)+ \sum_{m=1}^\infty\frac{(-A)^m}{m!}\E\left[\calT^m\right] .
\label{eq:786}
\end{equation}
The term $1-\P\left(\max_i\frac{M(\Psi_i)}{\delta_i}<I\right)$ is the coverage probability in a fully loaded heterogeneous network where the $m$-th tier density is $p_m \lambda_m$. This is derived in~\cite{DhiGanJ2012} and is given by:
\begin{equation}
1-\P\left(\max_i\frac{M(\Psi_i)}{\delta_i}<I\right)=\frac{\pi}{C(\alpha)} \frac{\sum\limits_{i=1}^K p_i \lambda_i P_i^{2/\alpha} \T_i^{-2/\alpha}}{\sum_{i=1}^K p_i \lambda_i P_i^{2/\alpha}}.
\end{equation}
Using Lemma \ref{lem:one} to evaluate $\E\left[\calT^m\right]$, we obtain the result.
\end{proof}
We note that the expression of coverage probability involves infinite summation over the sequence $g(m)$. Therefore, we first show that the infinite summation converges by showing that $|g(m)|\rightarrow 0$ as $m\rightarrow \infty$. Observe that:
\begin{align}
|g(m)| &\leq  \left(\frac{A}{\eta}\right)^m \frac{1}{\Gamma(1+\frac{2m}{\alpha})} \leq \frac{(A/\eta)^m}{\lfloor 1 + \frac{2m}{\alpha} \rfloor !} \nonumber \\
&= \frac{(A/\eta)^m}{\lceil \frac{2m}{\alpha} \rceil !} = \frac{\left[ \left(A/\eta \right)^\frac{m}{\lceil \frac{2m}{\alpha} \rceil} \right]^{\lceil \frac{2m}{\alpha} \rceil}  }{\lceil \frac{2m}{\alpha} \rceil!} \rightarrow 0,
\end{align}
where the limiting argument follows from the fact that the sequence of the form $x^n/n! \rightarrow 0$. In addition to proving that the series converges, this upper bound on $|g(m)|$ also sheds light on the behavior of the sequence $g(m)$. If $A/\eta < 1$, the bound decreases monotonically with $m$ and hence it is enough to consider only  a few significant terms to closely approximate the infinite sum. However, if $A/\eta > 1$, especially if $A/\eta \gg 1$, the upper bound first increases until $\lceil\frac{2m}{\alpha} \rceil \leq (A/\eta)^\frac{m}{\lceil \frac{2m}{\alpha}\rceil}$ and decreases thereafter. Therefore, the number of significant terms of $g(m)$ required to approximate the infinite sum would be higher. It can be easily shown that $A/\eta < 1$ for all choices of system parameters when the activity factor of each tier satisfies the following condition:
\begin{equation}
p_l > \frac{1}{1 + C(\alpha)\T_l^{2/\alpha}\left[\pi \Gamma(1+2/\alpha) \right]^{-1} }.
\end{equation}
For $\T_l = 1$ and $\alpha=4$, this value of $p_l$ comes out to be $\approx 0.36$. Therefore, the infinite sum can be tightly approximated by the first few significant terms of $g(m)$ in most operating scenarios. We will comment more on the convergence of $g(m)$ and the number of terms required to tightly approximate the coverage probability later in this section and in the Numerical Results Section. We now provide the exact expression for the coverage probability in a closed access network in the following Theorem. We recall that that coverage probability in closed-access is given by~\eqref{eq:coverage} with the only change that the product is over $\calB$ instead of $\calK$. By definition, coverage probability in closed access is less than that of open access. Using this definition, the proof proceeds exactly same as that of Theorem~\ref{thm:main}, and hence is not provided.

\begin{theorem}[Closed Access]
\label{thm:main_closed}
The downlink coverage probability of a typical mobile in a $K$-tier closed access network where a mobile is allowed to connect to $\calB\subseteq\calK$ tiers assuming $\T_i>1,\ \forall\ i,$ is
\begin{align}
\pc=& \frac{\pi}{C(\alpha)} \frac{\sum\limits_{i\in \calB} p_i \lambda_iP_i^{2/\alpha} \T_i^{-2/\alpha}}{\sum_{i=1}^K p_i \lambda_i P_i^{2/\alpha}}-\sum_{m=1}^\infty g_c(m),
\end{align}
where $g_c(m)$ and the corresponding expression for $A$ are given by \eqref{eq:gm} and \eqref{eq:A}, respectively, with the only difference that the summations defined over set $\calK$ are now over set $\calB$.
%\begin{align}
%g_c(m)=\left(\frac{-A_c}{\eta}\right)^m\left\{\frac{1}{\Gamma(1+\frac{2m}{\alpha})}-\frac{\pi\Gamma(1+\frac{2}{\alpha})}{\eta\Gamma(1+\frac{(m+1)2}{\alpha})}\sum_{i\in \calB}\frac{\lambda_ip_iP_i^{2/\alpha}\beta_i^{-2/\alpha}{}_2F_1(1,\frac{2m}{\alpha},1+\frac{(m+1)2}{\alpha},\frac{1}{1+\beta_i})}{(1+\beta_i)^{2m/\alpha}}\right\},
%\end{align}
%and
%\begin{align}
%A_c&=\pi\Gamma(1+2/\alpha)\sum_{l\in \calB}(1-p_l)\lambda_l P_l^{2/\alpha}\beta_l^{-2/\alpha} \label{eq:A_c}%\\
%\eta&=C(\alpha)\sum_{l=1}^K p_l\lambda_lP_l^{2/\alpha} \label{eq:eta}
%\end{align}
\end{theorem}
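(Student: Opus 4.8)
The plan is to repeat the proof of Theorem~\ref{thm:main} essentially verbatim, the only structural change being that the set of tiers a mobile may connect to shrinks from $\calK$ to $\calB$, while the interference field itself is left intact: BSs of the tiers in $\calK\setminus\calB$ still transmit and still contribute to $I$, so the law of $I$, its Laplace transform (Lemma~\ref{Lemma2}), and hence $\eta$ in \eqref{eq:eta} are unchanged. Accordingly I would start from \eqref{eq:coverage} with the product taken over $\calB$ instead of $\calK$, and track carefully which summations inherit this contraction.

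First I would exploit, exactly as in Theorem~\ref{thm:main}, the independence of the idle sets $\Delta_i$ and their fading from the common interference $I$: conditioning on $I$ and pushing the expectation inside the (now $\calB$-indexed) product gives $1-\pc=\E[\prod_{i\in\calB}\nb1(M(\Psi_i)/(I-M(\Psi))<\beta_i)\,\E[\nb1(M(\Delta_i)<\beta_i I)]]$. Each inner expectation is evaluated by the identical chain of steps (a)--(c): Rayleigh fading, the PGFL of the PPP $\Delta_i$, and reduction of the resulting integral to a Gamma function, yielding $\exp(-(1-p_i)\lambda_i\beta_i^{-2/\alpha}I^{-2/\alpha}P_i^{2/\alpha}\pi\Gamma(1+2/\alpha))$. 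Taking the product over $i\in\calB$ then collapses to $\exp(-AI^{-2/\alpha})$ with $A$ given by \eqref{eq:A} but with its sum restricted to $\calB$.

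Next I would rewrite the remaining product of active-set indicators as the single event $\nb1(\max_{i\in\calB}M(\Psi_i)/\delta_i<I)$, obtaining $1-\pc=\E[\nb1(\max_{i\in\calB}M(\Psi_i)/\delta_i<I)\exp(-AI^{-2/\alpha})]$. Expanding $\exp(-AI^{-2/\alpha})$ in its Taylor series and exchanging summation and expectation (the absolute convergence used in Theorem~\ref{thm:main} still applies), the $m=0$ term is $\P(\max_{i\in\calB}M(\Psi_i)/\delta_i<I)$, whose complement is the fully loaded closed-access coverage probability with connectable tiers $\calB$ and interferer densities $p_m\lambda_m$ for all $m\in\calK$; by \cite{DhiGanJ2012} this is the leading term of the theorem, with numerator summed over $\calB$ and denominator (coming from the full interference) over $\calK$. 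For $m\ge1$ the summand is $\frac{(-A)^m}{m!}\E[\calT^m]$ with $\calT$ as in Lemma~\ref{lem:one} but with the maximum restricted to $\calB$; invoking the $\calB$-version of Lemma~\ref{lem:one} turns each such term into $g_c(m)$, and assembling the pieces yields the claimed expression.

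The step that needs the most care is twofold. The first is the indicator rewriting: as in Theorem~\ref{thm:main} it rests on $\beta_i>1$ (so $\delta_i>1/2$ and at most one active BS can exceed its $\sir$ target), and one must ensure that the BS excluded from the interference in $M(\Psi_i)/(I-M(\Psi))$ is the strongest \emph{connectable} active BS, i.e.\ the maximizer over $\calB$; otherwise the equivalence with $\nb1(\max_{i\in\calB}M(\Psi_i)/\delta_i<I)$ can fail when the overall strongest active BS lies in $\calK\setminus\calB$. The second is the bookkeeping of index sets inside $g_c(m)$: re-running the computation of Appendix~\ref{app:Proof-of-Lemma-main} under the replacement $\max_{i\in\calK}\to\max_{i\in\calB}$ shows that it is precisely the sums in $A$ \eqref{eq:A} and $B$ \eqref{eq:B}, which encode the connectable tiers through their $\delta_i,\beta_i$, that contract to $\calB$, whereas $\eta$ \eqref{eq:eta} and the denominator of the leading term, both originating from the Laplace transform of the unaltered interference $I$, remain sums over all of $\calK$. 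Everything else is a line-by-line copy of the open-access argument.
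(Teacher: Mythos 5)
Your proposal is correct and follows exactly the route the paper intends: the paper omits the proof of Theorem~\ref{thm:main_closed} precisely because it ``proceeds exactly same as that of Theorem~\ref{thm:main}'' starting from \eqref{eq:coverage} with the product over $\calB$, which is what you carry out. Your explicit bookkeeping---that $A$ and $B$ (hence $g_c(m)$) contract to sums over $\calB$ while $\eta$ and the denominator of the leading term remain sums over all of $\calK$ because the interference $I$ still comes from every tier---is exactly the intended reading of the theorem statement, and your caveat about rewriting the outage event as $\nb1\left(\max_{i\in\calB} M(\Psi_i)/\delta_i < I\right)$ under $\T_i>1$ is the right way to handle the case where the overall strongest active BS is non-connectable.
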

%\begin{IEEEproof}
%The coverage probability for the open access network can be expressed as:
%\begin{align}
%\pc &= 1-\E\left[\prod_{i\in\calB}\nb1 \left(\frac{M(\Psi_i)}{I-M(\Psi)}<\beta_i\right)\nb1 \left(\frac{M(\Delta_i)}{I}<\beta_i\right)\right]\\
%&= 1 - \E\left[\prod_{i\in \calB}\nb1 \left(\frac{M(\Psi_i)}{I-M(\Psi)}<\beta_i\right)\E\left[ \nb1 \left(M(\Delta_i)<\beta_iI\right)\right] \right],
%\end{align}
%and the rest of the proof follows directly from Theorem~\ref{thm:main}. By definition, the coverage probability in this case is smaller than the open access case.
%\end{IEEEproof}

We conclude this discussion with a note that the proof technique introduced in this section is quite general and can be used to study variants of the load model introduced in the last section. For example, if the network is modeled such that it has a predefined set of BSs that are active and a typical mobile is allowed to connect only to the inactive set, it is easy to observe that the coverage probability under open access assumption is given by $\pc = 1-\E\left[\prod_{i=1}^K\nb1 \left(\frac{M(\Delta_i)}{I}<\beta_i\right)\right]$. From the proof of Theorem \ref{thm:main}, this corresponds to $1-\E[\exp(-AI^{-2/\alpha})]$ and can easily be evaluated following the proof technique of Theorem \ref{thm:main}. The same argument can be extended to the closed access case as well.

\subsection{Special Cases of Interest}
We now use the results derived in this section to study some special cases and compare the system performance with already known results for fully loaded system. First, we note that for a fully loaded system, the value of $A=0$ and hence $g(m)=g_c(m)=0,\ \forall\ m$. Therefore, the coverage probability in this case can be expressed as the following Corollary of Theorems~\ref{thm:main} and~\ref{thm:main_closed}.

\begin{cor}[Fully Loaded]
\label{cor:full-load}
For a fully loaded system, i.e., $p_i=1\ \forall\ i$, the coverage probability in open access is given by:
\begin{equation}
\pc = \frac{\pi}{C(\alpha)} \frac{\sum\limits_{i\in \calK}  \lambda_i P_i^{2/\alpha} \T_i^{-2/\alpha}}{\sum_{i=1}^K \lambda_i P_i^{2/\alpha}},
\label{eq:pc_fl_open}
\end{equation}
%and in closed access is given by:
%\begin{equation}
%\pc = \frac{\pi}{C(\alpha)} \frac{\sum\limits_{i\in\calB}  \lambda_i P_i^{2/\alpha} \T_i^{-2/\alpha}}{\sum_{i=1}^K \lambda_i P_i^{2/\alpha}},
%\end{equation}
which is the same as Corollary $1$ in~\cite{DhiGanJ2012}. The coverage probability in closed access is also given by \eqref{eq:pc_fl_open} with the only difference that the summation over the set $\calK$ is now over set $\calB$.
\end{cor}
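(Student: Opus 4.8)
The plan is to obtain this statement as a direct specialization of Theorem~\ref{thm:main} (and Theorem~\ref{thm:main_closed} for the closed access part) to the fully loaded regime $p_i=1$ for all $i$. The entire argument rests on the single observation that the correction series $\sum_{m=1}^\infty g(m)$ collapses to zero in this case.

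First I would inspect the definition of $A$ in~\eqref{eq:A}. Each summand there carries the factor $(1-p_l)$, so setting $p_l=1$ for every $l$ forces $A=0$. Next I would turn to the expression for $g(m)$ in~\eqref{eq:gm}: every term is multiplied by the prefactor $(-A/\eta)^m$. Before concluding that this prefactor kills $g(m)$, I would check that the bracketed quantity it multiplies stays finite, which amounts to verifying that $\eta$ in~\eqref{eq:eta} is strictly positive at $p_l=1$; indeed $\eta=C(\alpha)\sum_{l=1}^K \lambda_l P_l^{2/\alpha}>0$, and $B$ from~\eqref{eq:B} is a finite sum of finite hypergeometric values, so the brace is finite. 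Since the summation runs over $m\ge 1$, the prefactor equals $0^m=0$, and therefore $g(m)=0$ for every $m\ge 1$, giving $\sum_{m=1}^\infty g(m)=0$.

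It then remains to substitute $p_i=1$ into the leading term of Theorem~\ref{thm:main}, which immediately yields the claimed closed form~\eqref{eq:pc_fl_open}. For the closed access statement I would repeat the identical steps starting from Theorem~\ref{thm:main_closed}: the activity-weighted sum in $A$ (and in $g_c(m)$) is now indexed by $\calB$, but setting $p_l=1$ still annihilates $A$ and hence every $g_c(m)$, while the leading term specializes to the same ratio with the numerator summed over $\calB$.

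The main (and essentially only) point requiring care is the apparent indeterminacy in multiplying the vanishing prefactor against the bracket in~\eqref{eq:gm}: I would make explicit that $\eta\neq 0$ and $B<\infty$ under $p_l=1$, so the bracket is genuinely finite and $0\cdot(\text{finite})=0$ is legitimate, and that the index $m$ starts at $1$ so no $0^0$ ambiguity arises. Beyond this bookkeeping there is no real obstacle; the corollary is a clean substitution into the two theorems.
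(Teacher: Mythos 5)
Your proposal is correct and follows exactly the paper's route: the paper likewise observes that $p_i=1$ for all $i$ forces $A=0$ via the $(1-p_l)$ factors in \eqref{eq:A}, hence $g(m)=g_c(m)=0$ for all $m$, and the corollary follows by substituting into Theorems~\ref{thm:main} and~\ref{thm:main_closed}. Your added check that $\eta>0$ and $B<\infty$ (so the vanishing prefactor genuinely annihilates each term) is a small extra piece of rigor the paper leaves implicit, but it does not change the argument.
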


For a single tier open access network, the coverage probability derived in Theorem~\ref{thm:main} can be simplified and is expressed as the following Corollary.
\begin{cor}[Single Tier]
\label{cor:Pc1tier}
The coverage probability for the single tier open access network with BS activity factor $p$ is
\begin{align}
\pc=& \frac{\pi \T^{-2/\alpha}}{C(\alpha)} -\sum_{m=1}^\infty g(m),
\label{eq:Pc_sameT}
\end{align}
where the terms $\frac{A}{\eta}$ and $\frac{B}{\eta}$ appearing in the expression of $g(m)$ given by \eqref{eq:gm},
\begin{align}
\frac{A}{\eta} &= \frac{\pi\Gamma(1+\frac{2}{\alpha})(1-p)}{C(\alpha)p \T^\frac{2}{\alpha}} \label{eq:A_eta_simple} \\
\frac{B}{\eta} &= \frac{{}_2F_1(1,\frac{2m}{\alpha},1+\frac{(m+1)2}{\alpha},\frac{1}{1+\T})}{C(\alpha)\T^\frac{2}{\alpha}(1+\beta)^\frac{2m}{\alpha}}. \label{eq:B_eta_simple}
\end{align}
\end{cor}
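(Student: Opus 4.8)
The plan is to obtain this result as a direct specialization of Theorem~\ref{thm:main} to the case $K=1$; no new probabilistic argument is needed, since the single-tier model is simply the $K$-tier open-access model with a single summand in each of the relevant sums. First I would drop the tier index throughout, writing $\lambda$, $P$, $p$, and $\T=\beta$ for the single tier's density, power, activity factor, and target $\sir$, and I would verify that the structural form $\pc = (\text{leading term}) - \sum_{m=1}^\infty g(m)$ of Theorem~\ref{thm:main} carries over unchanged; only the coefficients entering $g(m)$ through \eqref{eq:gm} require simplification.

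The first step is to evaluate the leading term. Specializing the prefactor of Theorem~\ref{thm:main}, the numerator $\sum_{i} p_i\lambda_i P_i^{2/\alpha}\T_i^{-2/\alpha}$ collapses to $p\lambda P^{2/\alpha}\T^{-2/\alpha}$ and the denominator $\sum_i p_i\lambda_i P_i^{2/\alpha}$ collapses to $p\lambda P^{2/\alpha}$. The common factor $p\lambda P^{2/\alpha}$ cancels, leaving $\frac{\pi}{C(\alpha)}\T^{-2/\alpha}$, which is precisely the claimed leading term of \eqref{eq:Pc_sameT}.

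Next I would compute the two ratios that govern $g(m)$. Setting $K=1$ in \eqref{eq:A} gives $A=\pi\Gamma(1+2/\alpha)(1-p)\lambda P^{2/\alpha}\beta^{-2/\alpha}$, and setting $K=1$ in \eqref{eq:eta} gives $\eta=C(\alpha)p\lambda P^{2/\alpha}$. Forming $A/\eta$, the factor $\lambda P^{2/\alpha}$ cancels and, rewriting $\beta^{-2/\alpha}=\T^{-2/\alpha}$, one is left with \eqref{eq:A_eta_simple}. Similarly, specializing \eqref{eq:B} leaves a single summand, and dividing by the same $\eta$ again cancels $p\lambda P^{2/\alpha}$ and moves the $\beta^{-2/\alpha}$ into a $\T^{2/\alpha}$ in the denominator, yielding \eqref{eq:B_eta_simple}. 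Since $g(m)$ depends on the system parameters only through $A/\eta$ and $B/\eta$ (the remaining Gamma factors in \eqref{eq:gm} are parameter-free), substituting these two simplified ratios completes the derivation.

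The calculation is entirely routine, so there is no genuine obstacle, only bookkeeping. The one point worth flagging is the cancellation of $\lambda$ and $P$ in every ratio: this is not incidental but is exactly the invariance-to-density-and-power property advertised in the introduction for single-tier (and, more generally, equal-parameter) networks, so it serves as a useful consistency check. Care should be taken to keep $\beta$ and $\T$ identified throughout, since they denote the same quantity, so that the $\beta^{-2/\alpha}$ appearing in $A$ and $B$ is correctly absorbed into the $\T^{2/\alpha}$ in the denominators of \eqref{eq:A_eta_simple} and \eqref{eq:B_eta_simple}.
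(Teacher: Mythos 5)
Your proposal is correct and matches the paper's own route exactly: the paper states this corollary as a direct simplification of Theorem~\ref{thm:main} for $K=1$, which is precisely the specialization and cancellation of $\lambda P^{2/\alpha}$ in the prefactor, in $A/\eta$ from \eqref{eq:A} and \eqref{eq:eta}, and in $B/\eta$ from \eqref{eq:B} that you carry out. Your observation that $g(m)$ depends on the parameters only through $A/\eta$ and $B/\eta$, and the scale-invariance consistency check, align with Remark~\ref{rem:1tier} of the paper.
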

\begin{remark} [Scale invariance of a single tier network]
From Corollary~\ref{cor:Pc1tier}, we note that for any BS activity factor $p$, the coverage probability in a single tier open access network is independent of the BS density $\lambda$ and transmit power $P$. This is henceforth referred to as ``scale-invariance'' of cellular networks to changes in the BS density and their transmit powers.
\label{rem:1tier}
\end{remark}
Remark~\ref{rem:1tier} is a generalization of a similar result derived for fully loaded networks in~\cite{DhiGanJ2012}, which can easily be seen from Corollary~\ref{cor:full-load}. In addition to single tier networks, it was also observed in~\cite{DhiGanJ2012} that the general fully loaded open access multi tier networks also exhibit scale invariance if the target $\sir$s for all the tiers are the same. This can also be easily deduced from Corollary~\ref{cor:full-load}. Motivated by this observation, we study the coverage probability for our proposed load model in open-access multi tier networks under the assumption that the target $\sir$ is the same for all tiers in the next Corollary.

\begin{cor}[Coverage Probability: $K$-Tier with same $\T$] \label{cor:PcKtier}
The coverage probability for a $K$-tier open access network under the proposed load model assuming target $\sir$s to be the same ($=\T$) for all the tiers is given by \eqref{eq:Pc_sameT}, with the difference that the term $\frac{A}{\eta}$ appearing in the expression of $g(m)$ given by \eqref{eq:gm} is:
%\begin{align}
%g(m)=\left(\frac{-A}{\eta} \right)^m
%\left\{\frac{1}{\Gamma(1+\frac{2m}{\alpha})}-\frac{\pi\Gamma(1+\frac{2}{\alpha}) {}_2F_1(1,\frac{2m}{\alpha},1+\frac{(m+1)2}{\alpha},\frac{1}{1+\T})}{C(\alpha)\Gamma(1+\frac{(m+1)2}{\alpha})\T^{2/\alpha}(1+\beta)^{2m/\alpha}}\right\},
%\end{align}
%where
\begin{align}
\frac{A}{\eta} = \frac{\pi\Gamma(1+\frac{2}{\alpha})}{C(\alpha)\T^{2/\alpha} } \frac{\sum_{l=1}^K (1-p_l) \lambda_l P_l^{2/\alpha}}{\sum_{l=1}^K p_l \lambda_l P_l^{2/\alpha}},
\end{align}
and $\frac{B}{\eta}$ appearing in \eqref{eq:gm} is given by \eqref{eq:B_eta_simple}.
\end{cor}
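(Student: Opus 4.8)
The plan is to obtain this corollary as a direct specialization of Theorem~\ref{thm:main}: I would set $\T_i = \T$ for every $i \in \calK$ in the general open-access expression and simplify each constituent. No new probabilistic argument is needed, since Theorem~\ref{thm:main} already furnishes $\pc$ in terms of the quantities $A$, $B$, $\eta$ and the sequence $g(m)$; the task is purely to track how the common-threshold assumption collapses the tier sums. Because the functional form of $g(m)$ in~\eqref{eq:gm} is written entirely through the ratios $A/\eta$ and $B/\eta$, it suffices to recompute just these two ratios under the common-threshold assumption, so that the shape of $g(m)$ is inherited unchanged and only its arguments need updating.

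First I would handle the leading term. In Theorem~\ref{thm:main} it reads $\frac{\pi}{C(\alpha)} \frac{\sum_{i} p_i \lambda_i P_i^{2/\alpha} \T_i^{-2/\alpha}}{\sum_{i} p_i \lambda_i P_i^{2/\alpha}}$; substituting $\T_i = \T$ pulls the common factor $\T^{-2/\alpha}$ out of the numerator sum, after which the remaining $\sum_i p_i \lambda_i P_i^{2/\alpha}$ cancels against the denominator, leaving exactly $\frac{\pi \T^{-2/\alpha}}{C(\alpha)}$, which is the leading term of~\eqref{eq:Pc_sameT}.

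Next I would compute the two ratios from~\eqref{eq:A}, \eqref{eq:B} and \eqref{eq:eta}. For $A/\eta$, the factor $\beta_l^{-2/\alpha}$ in~\eqref{eq:A} becomes the tier-independent $\T^{-2/\alpha}$ and can be pulled outside the sum, so that $A = \pi\Gamma(1+\tfrac{2}{\alpha})\,\T^{-2/\alpha}\sum_{l}(1-p_l)\lambda_l P_l^{2/\alpha}$; dividing by $\eta = C(\alpha)\sum_{l} p_l \lambda_l P_l^{2/\alpha}$ gives precisely the claimed expression. The decisive step is the treatment of $B$: when $\beta_i = \T$ for all $i$, both the hypergeometric factor ${}_2F_1(1,\tfrac{2m}{\alpha},1+\tfrac{(m+1)2}{\alpha},\tfrac{1}{1+\T})$ and the factor $(1+\beta_i)^{2m/\alpha}$ in~\eqref{eq:B} lose their dependence on $i$ and therefore factor out of the tier sum, leaving behind only $\sum_i \lambda_i p_i P_i^{2/\alpha}$. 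This is exactly the sum appearing in $\eta$, so it cancels on forming $B/\eta$, and what survives matches~\eqref{eq:B_eta_simple} with $\beta = \T$.

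I do not anticipate a genuine obstacle here, since the corollary is an algebraic reduction rather than a fresh result. The one point that deserves care, and which I would state explicitly, is the factoring argument for $B$: the full cancellation of the tier sums against $\eta$ hinges on the hypergeometric and $(1+\beta_i)$ terms being constant in $i$, which holds only because the thresholds are equal; this is precisely why $B/\eta$ ends up identical to the single-tier form while $A/\eta$ retains the ratio of tier sums. I would close by substituting the two simplified ratios back into~\eqref{eq:gm}, confirming that the resulting $g(m)$ is the single-tier sequence of Corollary~\ref{cor:Pc1tier}, and hence that $\pc$ is given by~\eqref{eq:Pc_sameT} as claimed.
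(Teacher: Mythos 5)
Your proposal is correct and is essentially the paper's own (implicit) argument: the corollary is stated in the paper without a separate proof precisely because it is the direct specialization $\beta_i = \T$ of Theorem~\ref{thm:main}, with the leading term collapsing to $\pi\T^{-2/\alpha}/C(\alpha)$ and the ratios $A/\eta$ and $B/\eta$ in \eqref{eq:gm} simplifying exactly as you compute (the factor $\beta_i^{-2/\alpha}$ in $B$ also becomes tier-independent and factors out, which your conclusion implicitly uses). One wording caveat: your closing claim that the resulting $g(m)$ \emph{is} the single-tier sequence of Corollary~\ref{cor:Pc1tier} holds only when all $p_l$ coincide --- as you correctly note earlier in the proposal, $A/\eta$ retains the ratio of tier sums rather than reducing to the single-tier form \eqref{eq:A_eta_simple}, so $g(m)$ merely inherits the same functional form with that ratio in place of $(1-p)/p$.
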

\begin{remark}[Scale invariance of $K$-tier HCNs with same $\T$] \label{rem:Ktier}
From Corollary~\ref{cor:PcKtier}, we note that the coverage probability for $K$-tier HCNs is not scale invariant in general, even when target $\sir$s of all the tiers are the same. However, the invariance property does hold when the BS activity factors of all the tiers are the same. Interestingly, the coverage probability in this case is same as that of a single tier network given by Corollary~\ref{cor:Pc1tier}.
\end{remark}
%Although the scale-invariance result mentioned in Remark~\ref{rem:Ktier} may not seem intuitive but it holds in this special case since the whole interference field is thinned with the same probability $p$.
To understand this remark, we consider the following simple example.
\begin{example}[Scale invariance in a 2-tier HCN] \label{eg:2tier}
Consider a two tier network with BS activity factors $p_1$ and $p_2$. If $p_1<p_2$, increasing the density of the first tier leads to a higher increase in the intended power due to the higher likelihood of having a closer tier-$1$ BS as the serving BS but a relatively smaller increase in the interference power. The coverage probability in this case is expected to increase. On the other hand, if $p_1 > p_2$, increasing the density of tier-$1$ BSs leads to higher increase in the interference power as compared to the intended power, leading to a decrease in the coverage probability. The two effects cancel each other when the activity factors of the two tiers are the same. %This insight is in particular helpful in understanding the effect of adding small cells to the macrocell networks in the Numerical Results Section.
\end{example}
We now extend this result and derive exact condition under which the addition of $(K+1)^{th}$ tier won't affect (or will improve) the coverage of the existing $K$-tier network. We again assume same target $\sir$ for all the tiers. The result is given in the following Corollary.
\begin{cor}[Same $\T$: Effect of adding $(K+1)^{th}$ tier] The overall coverage probability increases with the addition of the $(K+1)^{th}$ tier if the load on the new tier satisfies:
\begin{equation}
p_{K+1} < \sum_{l=1}^K p_l \frac{\lambda_l P_l^{2/\alpha}}{\sum_{i=1}^K \lambda_i P_i^{2/\alpha}},
\label{eq:genKadd}
\end{equation}
decreases if the inequality is reversed and remains the same if \eqref{eq:genKadd} holds with equality.
\end{cor}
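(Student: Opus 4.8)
The plan is to reduce the whole statement to a one-dimensional monotonicity fact about an \emph{effective load}. First I would observe that, under the same-$\T$ assumption, the coverage expression of Corollary~\ref{cor:PcKtier} depends on the tier parameters $\{\lambda_l,P_l,p_l\}$ \emph{only} through the scalar $A/\eta$: the leading term $\pi\T^{-2/\alpha}/C(\alpha)$ is tier-independent, and the factor $B/\eta$ in \eqref{eq:gm} is given by \eqref{eq:B_eta_simple}, which involves only $\T$, $\alpha$, and $m$. Matching $A/\eta$ against the single-tier form \eqref{eq:A_eta_simple} then shows that the $K$-tier same-$\T$ coverage equals the single-tier coverage of Corollary~\ref{cor:Pc1tier} evaluated at the density-weighted effective load
\[
\bar p^{(K)}=\frac{\sum_{l=1}^K p_l\lambda_l P_l^{2/\alpha}}{\sum_{l=1}^K \lambda_l P_l^{2/\alpha}},
\]
since $\tfrac{1-\bar p^{(K)}}{\bar p^{(K)}}=\tfrac{\sum_l(1-p_l)\lambda_l P_l^{2/\alpha}}{\sum_l p_l\lambda_l P_l^{2/\alpha}}$. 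Note that $\bar p^{(K)}$ is exactly the right-hand side of \eqref{eq:genKadd}; thus I have reduced everything to understanding $\pc$ as a function of $\bar p$.

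The crux is then to prove that the single-tier coverage $\pc(\bar p)$ is \emph{strictly decreasing} in $\bar p\in(0,1]$. I would establish this not by differentiating the infinite series $\sum_m g(m)$ term by term (the alternating signs and the $m$-dependence of $B/\eta$ make this awkward), but by returning to the pre-summation representation from the proof of Theorem~\ref{thm:main}, specialized to one tier,
\[
1-\pc=\E\left[\nb1\left(\tfrac{M(\Psi)}{\delta}<I\right)\exp\!\left(-A\,I^{-2/\alpha}\right)\right],\qquad \delta=\tfrac{\T}{1+\T}.
\]
The obstacle here is that the conditional-thinning coupling ties together the law of $(M(\Psi),I)$ and the constant $A$, both of which move with $p$. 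Scale invariance (Remark~\ref{rem:1tier}) is precisely the tool that disentangles them: by choosing $\lambda,P$ so that $p\lambda P^{2/\alpha}=1$ for every $p$ without altering $\pc$, the joint law of $(M(\Psi),I)$ becomes fixed, because it depends on the parameters only through the value-process intensity $\Lambda((t,\infty))=\pi\Gamma(1+\tfrac{2}{\alpha})\,p\lambda P^{2/\alpha}\,t^{-2/\alpha}$ and the Laplace transform of Lemma~\ref{Lemma2}. Meanwhile $A=\pi\Gamma(1+\tfrac{2}{\alpha})\tfrac{1-p}{p}\,\T^{-2/\alpha}$ is strictly decreasing in $p$. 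Since $\exp(-AI^{-2/\alpha})$ is then strictly increasing in $p$ for each realization with $0<I<\infty$ (which holds a.s.\ for $\alpha>2$) and the indicator is unaffected by $A$, the expectation $1-\pc$ strictly increases with $p$, giving strict monotone decrease of $\pc$.

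Finally I would quantify how adding a tier moves the effective load. Writing $W=\sum_{l=1}^K\lambda_l P_l^{2/\alpha}$ and $x=\lambda_{K+1}P_{K+1}^{2/\alpha}>0$, the augmented network has
\[
\bar p^{(K+1)}=\frac{W\,\bar p^{(K)}+x\,p_{K+1}}{W+x},
\]
a convex combination (mediant) of $\bar p^{(K)}$ and $p_{K+1}$ with strictly positive weights. Hence $\bar p^{(K+1)}<\bar p^{(K)}$ if and only if $p_{K+1}<\bar p^{(K)}$, i.e.\ exactly when \eqref{eq:genKadd} holds, with equality and the reverse inequality covering the remaining cases. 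Combining with Step~1, the augmented coverage is $\pc(\bar p^{(K+1)})$ and the original is $\pc(\bar p^{(K)})$; by the strict monotonicity of Step~2, adding the tier strictly increases coverage iff $\bar p^{(K+1)}<\bar p^{(K)}$ iff \eqref{eq:genKadd} holds, decreases it iff the inequality reverses, and leaves it unchanged iff equality holds. I expect the monotonicity of Step~2 to be the only substantive difficulty; Steps~1 and~3 are bookkeeping once the effective-load reduction is in hand.
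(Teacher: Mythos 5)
Your proposal is correct, and its skeleton coincides with the paper's own proof: reduce the same-$\T$ $K$-tier coverage to the single-tier coverage of Corollary~\ref{cor:Pc1tier} evaluated at the effective load $p_{\rm eff}^{(K)}=\sum_{l=1}^K p_l\lambda_l P_l^{2/\alpha}\big/\sum_{i=1}^K\lambda_i P_i^{2/\alpha}$ by matching $A/\eta$ with \eqref{eq:A_eta_simple}, invoke monotonicity of coverage in the load, and then compare $p_{\rm eff}^{(K+1)}$ with $p_{\rm eff}^{(K)}$. Your Step~3 is the paper's concluding inequality recast as a mediant/convex-combination identity; this is marginally sharper, since it yields the ``if and only if'' form directly and settles all three cases at once, whereas the paper verifies one direction and remarks that the others follow similarly. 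The one genuinely different (and stronger) ingredient is your Step~2: the paper merely asserts, from the equivalence with the single-tier expression, that $\pc$ is a decreasing function of $p_{\rm eff}$, while you actually prove it, and your argument is sound. Working from the pre-summation representation $1-\pc=\E\left[\nb1\left(M(\Psi)/\delta<I\right)\exp\left(-A I^{-2/\alpha}\right)\right]$ in the proof of Theorem~\ref{thm:main}, you exploit the scale invariance of Remark~\ref{rem:1tier} to normalize $p\lambda P^{2/\alpha}=1$, which fixes the joint law of $(M(\Psi),I)$ (the received-power process is a Poisson process on $(0,\infty)$ whose intensity depends on the parameters only through $p\lambda P^{2/\alpha}$, consistent with Lemma~\ref{Lemma2}), while $A\propto(1-p)/p$ strictly decreases in $p$; pointwise comparison on the almost-sure event $0<I<\infty$ (valid for $\alpha>2$) then gives strict monotone decrease of $\pc$, with strictness secured because $\P\left(M(\Psi)<\delta I\right)>0$, as the fully loaded coverage $\frac{\pi}{C(\alpha)}\T^{-2/\alpha}$ is below one for $\T>1$. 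So your route buys a rigorous, strict version of the monotonicity step that the published proof leaves implicit; otherwise the two proofs are the same.
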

\begin{IEEEproof}
%From Corollary~\ref{cor:PcKtier} we note that the only term in the coverage probability expression that will change with the addition of new tier is $A/\eta$. The scale invariance result follows from rewriting it as:
From Corollary~\ref{cor:PcKtier} we note that the only term in the coverage probability expression that will change with the addition of a new tier is $A/\eta$. It can be expressed as:
\begin{align}
\frac{A}{\eta} = \frac{\pi\Gamma(1+\frac{2}{\alpha})}{C(\alpha)\T^{2/\alpha} } \left(  \left[ \sum_{l=1}^K p_l \frac{\lambda_l P_l^{2/\alpha}}{\sum_{i=1}^K \lambda_i P_i^{2/\alpha}} \right]^{-1}  - 1  \right).
\end{align}
Defining effective load on a $K$-tier network as:
\begin{align}
p_{\rm eff}^{(K)} = \sum_{l=1}^K p_l \frac{\lambda_l P_l^{2/\alpha}}{\sum_{i=1}^K \lambda_i P_i^{2/\alpha}},
\end{align}
$A/\eta$ can be expressed as:
\begin{align}
\frac{A}{\eta} = \frac{\pi\Gamma(1+\frac{2}{\alpha})}{C(\alpha)\T^{2/\alpha} } \left(  \frac{1-p_{\rm eff}^{(K)}}{p_{\rm eff}^{(K)}}  \right),
\end{align}
which is the same as~\eqref{eq:A_eta_simple} for the single tier coverage result derived in Corollary~\ref{cor:Pc1tier}. From this equivalence, it follows that the  coverage probability is a decreasing function of $p_{\rm eff}$. Therefore, if the addition of the new tier leads to lower effective load on the network, the coverage will increase. This can be shown to be the case when \eqref{eq:genKadd} holds as follows:
\begin{align}
p_{\rm eff}^{(K+1)} &= \sum_{l=1}^{K+1} p_l \frac{\lambda_l P_l^{2/\alpha}}{\sum_{i=1}^{K+1} \lambda_i P_i^{2/\alpha}} \\
&\leq  \frac{ \sum_{l=1}^K p_l\lambda_l P_l^{\frac{2}{\alpha}}}{\sum_{i=1}^{K+1} \lambda_i P_i^{\frac{2}{\alpha}}} + \frac{ \lambda_{K+1} P_{K+1}^{\frac{2}{\alpha}}}{\sum_{i=1}^{K+1} \lambda_i P_i^{\frac{2}{\alpha}}}   \frac{ \sum_{l=1}^K p_l\lambda_l P_l^{\frac{2}{\alpha}}}{\sum_{i=1}^{K} \lambda_i P_i^{\frac{2}{\alpha}}}\\
& = p_{\rm eff}^{(K)}.
\end{align}
The other two results follow using the same argument.
%and the inequality under which the coverage probability increases follows from the intuition that the R.H.S. of \eqref{eq:genKadd} can be thought of as the (weighted) average load on the existing network and the coverage probability will increase if the load on the new tier is less than this load as discussed in Example~\ref{eg:2tier}.
\end{IEEEproof}
\subsection{Bounds on the Coverage Probability}
Evaluation of the exact expression of the coverage probability requires an infinite summation. Although we have argued that the summation can be tightly approximated by considering only a first few terms, we haven't yet provided a formal method to determine the exact number of terms required such that the approximation error is within predefined limit, say $\epsilon$. Interestingly, this can be achieved as a by-product of the set of bounds we derive in this section that can be made arbitrarily tight. The idea is to use the following identity of $\exp(-x)$.
\begin{lemma}
For $x\geq0$ and $m>0$,
\begin{equation}
\sum_{i=0}^{2m-1} \frac{(-x)^i}{i!} \leq \exp(-x) \label{eq:boundx}
\leq \sum_{i=0}^{2m} \frac{(-x)^i}{i!}.
\end{equation}
\end{lemma}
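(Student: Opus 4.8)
The plan is to prove these two-sided bounds by a parity induction on the truncation index, exploiting the self-referential structure of the exponential's Taylor remainder. First I would define, for each integer $n\geq 0$,
\[
R_n(x) = \exp(-x) - \sum_{i=0}^{n} \frac{(-x)^i}{i!},
\]
so that the claimed inequalities read precisely $R_{2m-1}(x)\geq 0$ and $R_{2m}(x)\leq 0$ for $x\geq 0$. The key observation is that differentiating term by term gives the self-referential identity $R_n'(x) = -R_{n-1}(x)$, while $R_n(0)=0$ for every $n\geq 0$. Integrating from $0$ to $x$ therefore yields the clean recursion $R_n(x) = -\int_0^x R_{n-1}(t)\,\nrmd t$, which couples each remainder to the one at the previous level.

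With this recursion in hand, I would run an induction on the parity of $n$, asserting that $R_n(x)\leq 0$ on $[0,\infty)$ when $n$ is even and $R_n(x)\geq 0$ on $[0,\infty)$ when $n$ is odd. The base case $n=0$ is immediate, since $R_0(x)=\exp(-x)-1\leq 0$ for $x\geq 0$. For the inductive step, if $n$ is even then $n-1$ is odd, so $R_{n-1}(t)\geq 0$ on $[0,\infty)$ by hypothesis, and the recursion forces $R_n(x) = -\int_0^x R_{n-1}(t)\,\nrmd t \leq 0$; the odd case is symmetric, with the sign flipping once more. Specializing to $n=2m-1$ and $n=2m$ then delivers exactly \eqref{eq:boundx}.

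The only genuine subtlety worth flagging is why one cannot simply invoke the alternating-series remainder estimate. Although $\exp(-x)$ is an alternating series for $x>0$, its terms $x^i/i!$ are not monotonically decreasing in $i$ once $x$ exceeds $1$ — they grow until $i\approx x$ and only then decay — so the textbook rule that the error is bounded by the first omitted term, with matching sign, does not hold uniformly on $[0,\infty)$. The integration recursion above sidesteps this entirely, since it never needs monotonicity of the individual terms; it uses only the sign of $R_{n-1}$ inherited from the previous parity level. An equally short alternative would be to write the integral form of the remainder, $R_n(x)=\frac{(-1)^{n+1}}{n!}\int_0^x (x-t)^n \exp(-t)\,\nrmd t$, and read off the sign of $R_n$ directly from $(-1)^{n+1}$, the integrand being nonnegative on $[0,x]$; I would mention this as a one-line confirmation but carry out the parity induction as the primary argument.
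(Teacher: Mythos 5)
Your proof is correct and is essentially the paper's own argument: the paper likewise inducts on the truncation order, using agreement of both sides at $x=0$ and the fact that differentiating the partial sum reduces the lower bound at level $m$ to the upper bound at level $m-1$, which is precisely your recursion $R_n(x)=-\int_0^x R_{n-1}(t)\,\nrmd t$ anchored at $R_n(0)=0$. Your closing observation via the integral form of the Taylor remainder is a valid one-line alternative (and your caveat about the alternating-series test failing for $x>1$ is well taken), but the parity induction you carry out as the main argument is the same route the paper takes, just written out more explicitly.
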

\begin{IEEEproof}
The proof follows from induction. Since $\exp(-0)=1$ and $\sum_{i=0}^{2m-1} \frac{(-0)^{i}}{i!}=1$,  it suffices  to prove that $\frac{d}{d x}\sum_{i=0}^{2m-1} \frac{(-x)^{i}}{i!}-\exp(-x) <0$, which follows from the upper bound when $m=m-1$.
\end{IEEEproof}
Using this identity in the proof of Theorem~\ref{thm:main} results in the following bounds.
\begin{lemma}[Bounds on Coverage Probability]
For $m>0$, the coverage probability for the proposed load model can be bounded as
\begin{equation}
- \sum_{i=1}^{2m}g(i)\leq \pc- \frac{\pi}{C(\alpha)} \frac{\sum\limits_{i=1}^K p_i \lambda_i P_i^{2/\alpha}\T_i^{-2/\alpha}}{\sum_{i=1}^K p_i \lambda_i P_i^{2/\alpha} }\leq - \sum_{i=1}^{2m-1}g(i)
\end{equation}
\end{lemma}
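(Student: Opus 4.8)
The plan is to feed the elementary two-sided bound on $\exp(-x)$ from the preceding Lemma directly into the expectation that appears midway through the proof of Theorem~\ref{thm:main}. Recall that that proof collapses the complement of coverage into the single expectation
\[
1-\pc = \E\left[\nb1\left(\max_{i\in\calK}\frac{M(\Psi_i)}{\delta_i}<I\right)\exp\left(-A\,I^{-2/\alpha}\right)\right],
\]
where $A\geq0$ is given by \eqref{eq:A}. Because $A\geq0$ and $I^{-2/\alpha}\geq0$, the random argument $x=A\,I^{-2/\alpha}$ is almost surely nonnegative, so the hypothesis $x\geq0$ of the sandwich Lemma is met pointwise and \eqref{eq:boundx} applies for every $m>0$ inside the expectation.

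First I would substitute $x=A\,I^{-2/\alpha}$ into \eqref{eq:boundx}, obtaining, for each realization,
\[
\sum_{i=0}^{2m-1}\frac{(-A)^i I^{-2i/\alpha}}{i!}\;\leq\;\exp\!\left(-A\,I^{-2/\alpha}\right)\;\leq\;\sum_{i=0}^{2m}\frac{(-A)^i I^{-2i/\alpha}}{i!}.
\]
Multiplying through by the nonnegative indicator $\nb1(\max_i M(\Psi_i)/\delta_i<I)$ preserves the direction of both inequalities, and taking expectations (monotonicity of $\E$, legitimate because each side is now a finite sum, so no convergence issue arises) sandwiches $1-\pc$ between the two truncated series.

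Next I would identify the resulting finite sums term by term, reusing exactly the computations of Theorem~\ref{thm:main}. The $i=0$ term is $\E[\nb1(\max_i M(\Psi_i)/\delta_i<I)]=\P(\max_i M(\Psi_i)/\delta_i<I)$, which by the fully-loaded identity invoked in that proof equals $1-F$, where $F=\frac{\pi}{C(\alpha)}\frac{\sum_i p_i\lambda_i P_i^{2/\alpha}\T_i^{-2/\alpha}}{\sum_i p_i\lambda_i P_i^{2/\alpha}}$ is precisely the constant subtracted from $\pc$ in the statement. Each term with $i\geq1$ equals $\frac{(-A)^i}{i!}\E[\calT^i]=g(i)$ by Lemma~\ref{lem:one}, just as in the passage from \eqref{eq:786} to Theorem~\ref{thm:main}. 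Hence the lower and upper bounds on $1-\pc$ read $(1-F)+\sum_{i=1}^{2m-1}g(i)$ and $(1-F)+\sum_{i=1}^{2m}g(i)$, respectively.

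Finally I would subtract $1$ throughout and negate. Negation reverses the inequalities and interchanges the roles of the two partial sums, after which moving $F$ to the left gives exactly
\[
-\sum_{i=1}^{2m}g(i)\;\leq\;\pc-F\;\leq\;-\sum_{i=1}^{2m-1}g(i),
\]
which is the claim. There is no genuine obstacle in this argument; the only points requiring care are the bookkeeping of the sign flip under negation (the even-index truncation, the \emph{upper} bound on $\exp(-x)$, becomes the \emph{lower} bound on $\pc$, and conversely), and the observation that it is the nonnegativity of the indicator that lets the pointwise sandwich survive multiplication and the passage to expectation.
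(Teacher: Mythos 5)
Your proposal is correct and takes essentially the same route as the paper, whose proof of this lemma is precisely the one-line instruction to use the $\exp(-x)$ sandwich identity inside the expectation $1-\pc=\E\bigl[\nb1\bigl(\max_{i\in\calK}M(\Psi_i)/\delta_i<I\bigr)\exp(-AI^{-2/\alpha})\bigr]$ from the proof of Theorem~\ref{thm:main}, with the $i=0$ term identified with the fully loaded coverage expression and the $i\geq1$ terms with $g(i)$ via Lemma~\ref{lem:one}. Your sign bookkeeping (the even-index truncation, which upper-bounds $\exp(-x)$, becoming the lower bound on $\pc-F$ after negation) is exactly right.
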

Clearly, these bounds can be made arbitrarily tight by increasing the value of $m$. Interestingly, these bounds are closely related to the exact expression of coverage probability derived in Theorem~\ref{thm:main}. In particular, the upper and lower bounds are derived by truncating the infinite sum over $g(m)$ at odd and even number of terms, respectively. Therefore, these bounds provide a direct way to find the number of terms of $g(m)$ required to ensure an approximation error within a predefined limit $\epsilon$, which is equal to $M_{\epsilon}$, where $M_\epsilon = \min\limits_{m} |g(m)| < \epsilon$. We will use this observation in the study of the convergence of infinite sum over $g(m)$ in the Numerical Results Section.

We conclude this section by noting that some terms of the sequence $g(m)$ can be expressed in closed form, leading to closed form bounds for the special case when $\alpha=4$ and $m=2$. The bounds in this case depend only on the first two terms of $g(m)$ that can be expressed as:
\begin{align}
g(1)&=\frac{-A}{\eta}\left\{\frac{2}{\sqrt{\pi} }-\frac{4\sqrt{\pi}}{\eta}\sum_{i=1}^K\frac{\lambda_i p_iP_i^{1/2}\beta_i^{-1/2}}{1+\sqrt{1+\beta_i}}  \right\}\\
g(2)&=\left(\frac{A}{\eta}\right)^2 \nonumber \\ &\left\{1-\frac{2\pi}{\eta}\sum_{i=1}^K\lambda_ip_iP_i^\frac{1}{2}(\beta_i^{-\frac{1}{2}} -\csc^{-1}(\sqrt{1+\beta_i}))\right\}.
\end{align}

\section{Numerical Results}
\begin{figure}[t!]
\centering
\includegraphics[width=\columnwidth]{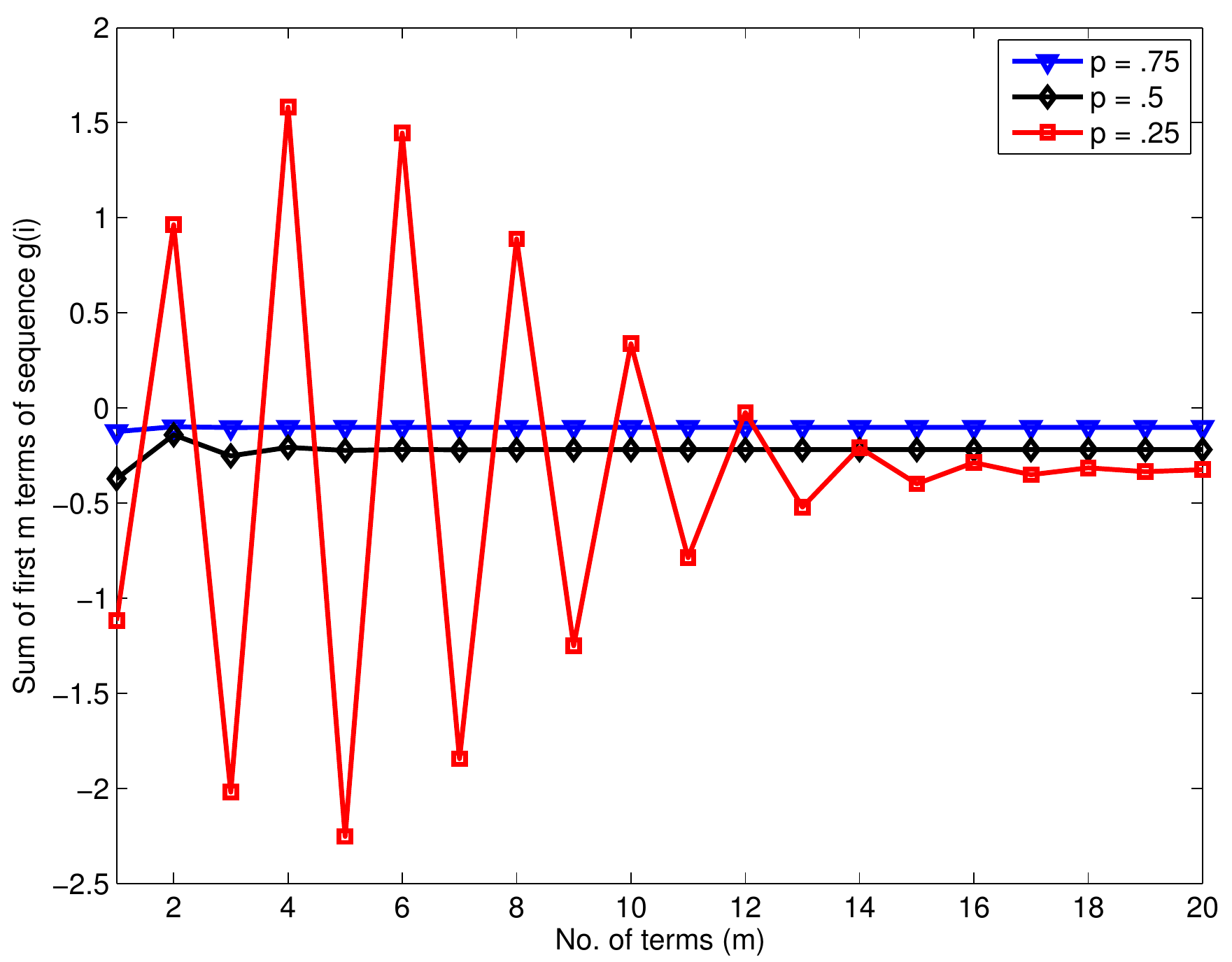}
\caption{Plot showing the convergence of the series $\sum_{i}g(i)$ for various BS activity factors in a single tier network with $\T = 1$.}
\label{fig:Conv_gm}
\end{figure}

\begin{figure}[t!]
\centering
\includegraphics[width=\columnwidth]{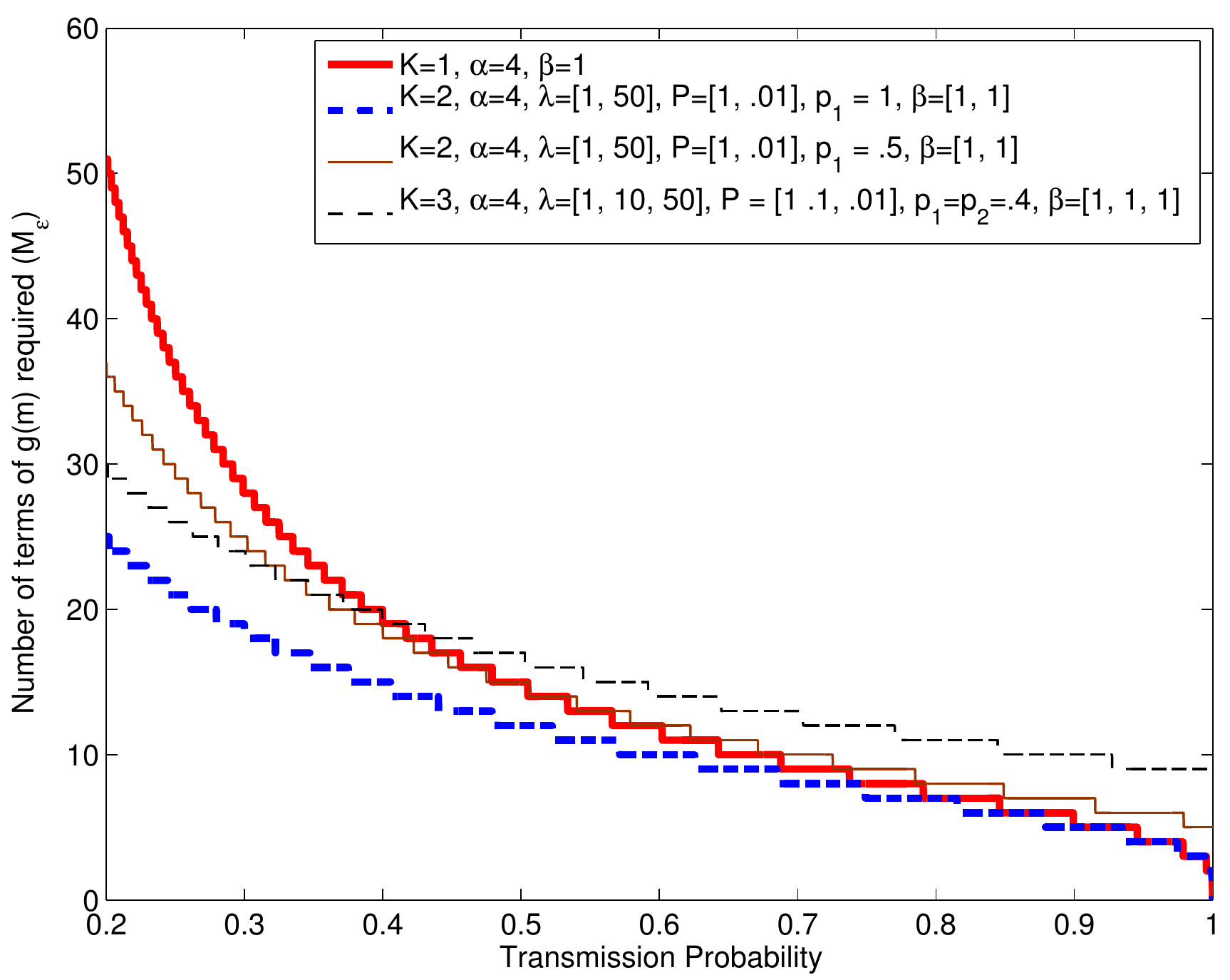}
\caption{Number of terms of the sequence $g(m)$ required as the function of the transmission probability of the lowest tier for $\epsilon = 10^{-8}$.}
\label{fig:NTerms_gm}
\end{figure}

\begin{figure}[t!]
\centering
\includegraphics[width=\columnwidth]{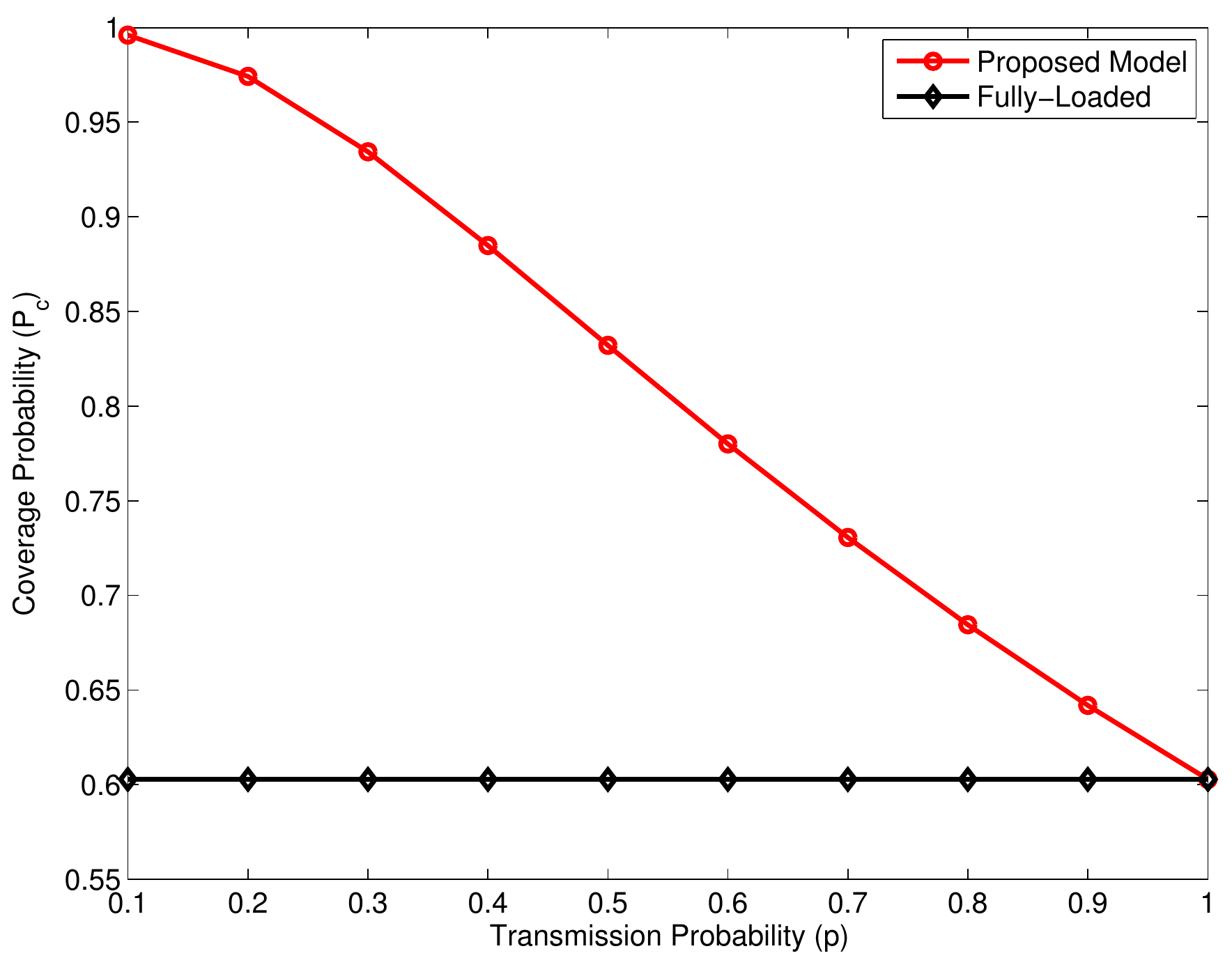}
\caption{Coverage probability as a function of transmission probability is a single tier network ($\T = 1$ and $\alpha = 3.8$).}
\label{fig:FLComp_1tier}
\end{figure}

Since most of the analytical results derived in this paper are fairly self-explanatory and do not require separate numerical treatment, we will provide only those results which help in validating key modeling assumptions or help better visualize certain important trends.
\subsection{Convergence of Infinite Sum}
We study the convergence of the infinite sum appearing in the coverage probability expression in Figs.~\ref{fig:Conv_gm} and~\ref{fig:NTerms_gm}. Fig.~\ref{fig:Conv_gm} plots the truncated series $\sum_{i=1}^m g(i)$ as the function of $m$ for a single tier network and hence gives insights about the number of terms required until the series converges. To understand the trends, recall that the ratio $A/\eta$ decreases monotonically with the activity factor $p$. Therefore, the number of terms required for the series to converge are higher when the BS activity factor is lower. Moreover, for the case when $A/\eta>1$, i.e., $p=.25$, the series first increases until a certain point and then decreases and finally converges to its limiting value. This trend has been discussed in detail earlier in the paper when we proved the convergence of the infinite sum. To provide an idea of the number of terms required such that the approximation is within $\epsilon$ of the exact value, we plot the number of terms $M_\epsilon$ for various scenarios in Fig.~\ref{fig:NTerms_gm}. We again note that the number of terms required are reasonably small unless the transmission probability of some tier is extremely small.

\subsection{System Model Validation}
\begin{figure}[t!]
\centering
\includegraphics[width=\columnwidth]{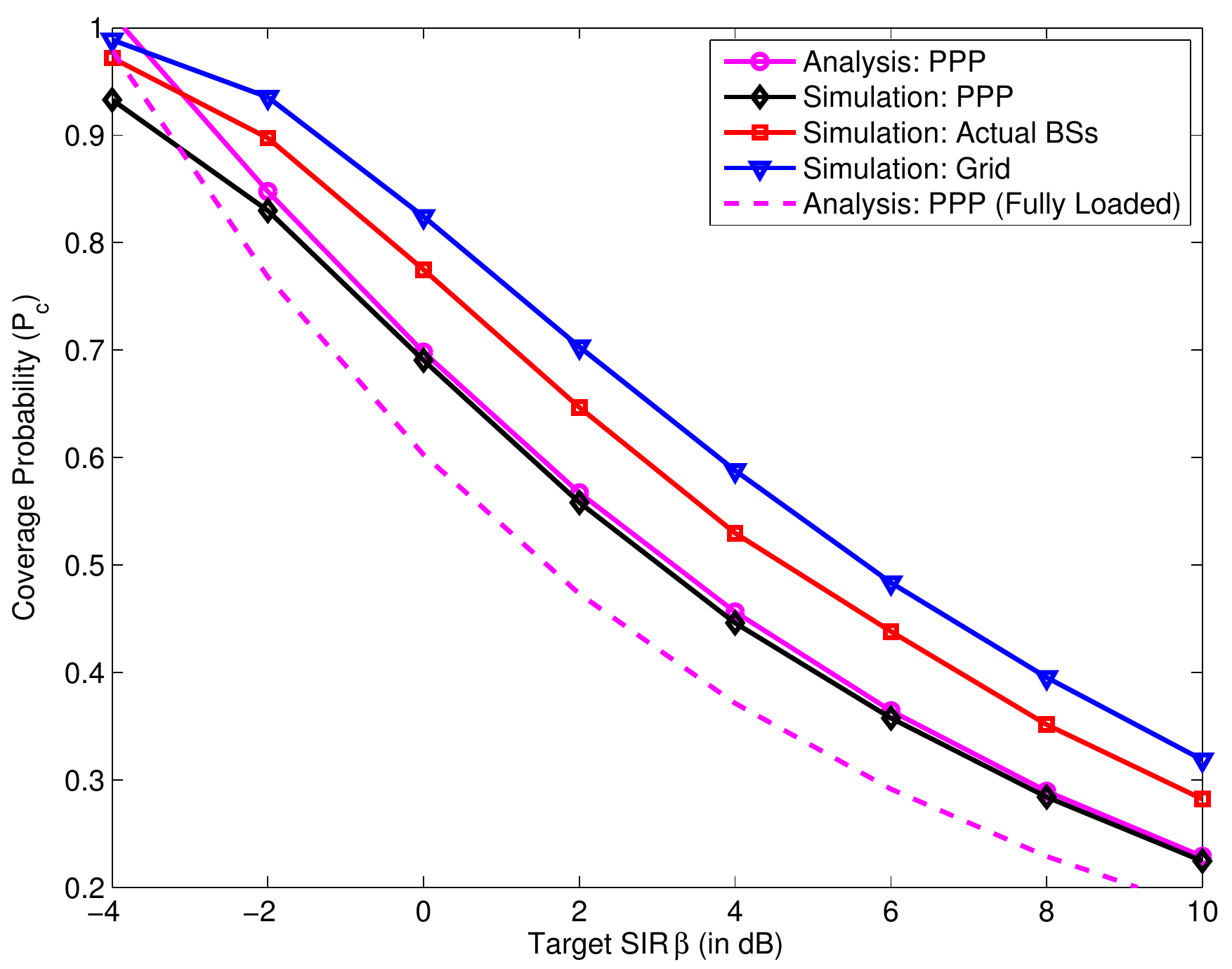}
\caption{Comparison of the coverage probability of the PPP and grid models with the actual BS locations of macrocells. The second tier is modeled as PPP in all three cases ($K=2$, $\lambda_2 = 2\lambda_1$, $P = [1, 0.01]$, $L=40\times 40$ Km$^2$, $\alpha = 3.8$, $p = [0.8, 0.6]$, $\beta_1 = \beta_2 = \beta$).}
\label{fig:PPP_Grid_Actual}
\end{figure}

\subsubsection{Comparison with the fully loaded system}
After gaining insights into the behavior of the coverage probability expression, we now use it to highlight the importance of the proposed model by comparing the coverage results of a single tier network with those of a fully loaded system in Fig.~\ref{fig:FLComp_1tier}. Although a huge difference in the coverage guarantees was expected for very low BS activity factors, it is indeed interesting that the coverage estimates assuming full load are quite pessimistic even for reasonably high load scenarios, such as $p=.7-.8$.
\subsubsection{Comparison with an actual $4$G deployment} After highlighting the importance of the proposed load model, we now validate the PPP model used for the BS locations in this paper. While this model seems sensible for the small cells, especially the ones driven by unplanned user deployments, such as femtocells, it is dubious for the centrally planned tiers such as macrocells. Therefore, with a special focus on the macrocells, we consider three location models for a two tier HCN: i) macrocell locations modeled as a realization of a PPP, ii) macrocell locations modeled as a hexagonal grid, iii) macrocell locations drawn from an actual $4$G deployment over $40 \times 40$ Km$^2$ area~\cite{AndBacJ2011,DhiGanJ2012}. The second tier is modeled as a PPP in all three cases. The numerically evaluated coverage probability results for all these models along with the analytical results of the proposed load model and the fully-loaded PPP model are presented in Fig.~\ref{fig:PPP_Grid_Actual}. We first note that the proposed PPP model is about as accurate as the grid model when compared to the actual $4$G deployment, with the grid model providing an upper bound and the PPP model providing a lower bound to the actual coverage probability. This is consistent with the conclusions of~\cite{AndBacJ2011,DhiGanJ2012}, which focus on fully-loaded cellular models in single tier and multi tier cellular networks, respectively. Second, we note that the analytical results derived for the proposed load model are accurate down to about $-2$dB even though they were derived under the assumption that the target-$\sir$ is greater than $0$dB for all the tiers. Since this covers most of the cell edge users as well, the proposed analytical results are reasonably accurate in the operational regime of the current cellular networks. Third, we note that the fully-loaded model provides a very loose lower bound to the actual coverage probability, thereby highlighting again the importance of the proposed load model.

\begin{figure}[t!]
\centering
\includegraphics[width=\columnwidth]{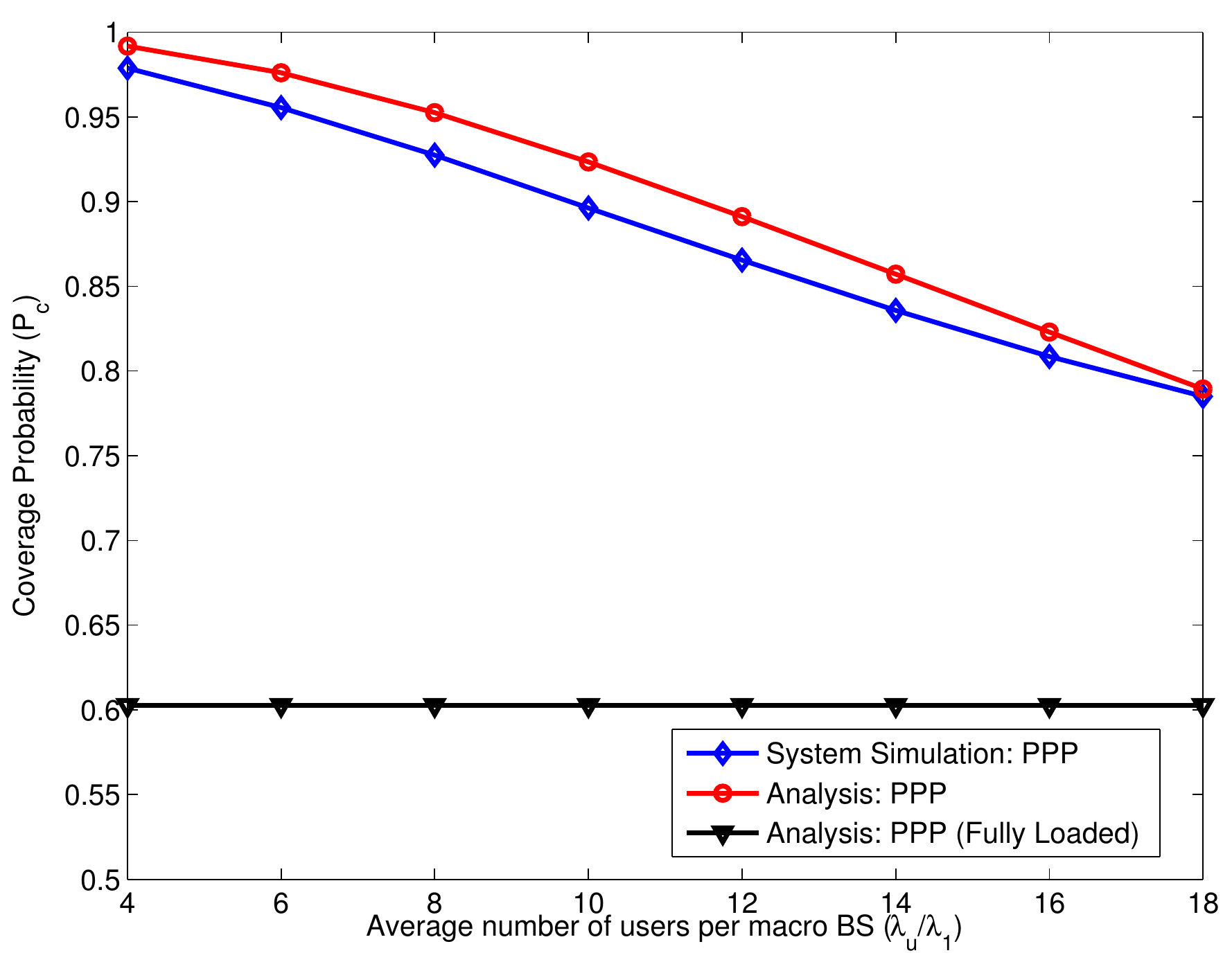}
\caption{Comparison of the derived theoretical results with detailed system simulation accounting for actual load factors resulting from actual coverage regions ($K=2$, $P = [1,0.1]$, $\beta_1 = \beta_2 = 0$dB, $\lambda_1 = \lambda_2$, $M_{RB} = 20$, $\alpha = 3.8$).}
\label{fig:SystemSim}
\end{figure}

\subsubsection{Comparison with a detailed simulation} The following two assumptions were made to facilitate analysis: i) the activity factors are the same for all the BSs of a particular tier, and ii) the activity of each BS is independent of the other BSs.
We now validate these assumptions by comparing the analytical results with a detailed system simulation. For this comparison, we consider a simulation setup consisting of a two tier HCN, with the BSs of each tier modeled by independent PPPs. The user locations are also modeled by an independent PPP with density $\lambda_u$. As in the proposed model, each user is associated with the BS that provides the best received signal strength. From this, we calculate the actual load being served by each BS in terms of the number of users, which we denote by $N_{x_i}$ for a BS located at $x_i$. Assuming the number of orthogonal resource blocks, e.g., time-frequency resource blocks in LTE~\cite{GhoZhaB2010}, to be $M$, the activity factor of a BS in each resource block can be expressed as $p_{x_i} = N_{x_i}/M$ as discussed in Section~\ref{sec:sysmod}. To keep the setup simple, we consider the regime where the probability of having $N_{x_i} > M$ for any BS is small and whenever it happens, the activity factor for that BS is assumed to be $1$. For this setup, the coverage probability results are presented in Fig.~\ref{fig:SystemSim}.

For a meaningful comparison of this simulation result with the analytical results, we first need to find analytical expressions of the activity factors $p_i$ as a function of the user density $\lambda_u$. For this, we leverage Corollary $2$ of~\cite{DhiGanJ2012}, where it is shown that the fraction of users served by $j^{th}$ tier is given by:
\begin{align}
\bar{N}_j = \frac{\lambda_j \left(P_j/\beta_j  \right)^{\frac{2}{\alpha}}}{\sum_{i=1}^K \lambda_i \left(P_i/\beta_i  \right)^{\frac{2}{\alpha}}}.
\end{align}
Using this result, the average number of users served by a $j^{th}$ tier BS (average load) is $\frac{\lambda_u}{\lambda_j}\bar{N}_j$. Therefore, assuming $M$ resource blocks, the activity factor in a randomly chosen resource block is:
\begin{align}
p_j = \frac{1}{M} \frac{\lambda_ u \bar{N}_j}{\lambda_j} = \frac{\lambda_u}{M} \frac{\left(P_j/\beta_j  \right)^{\frac{2}{\alpha}}}{\sum_{i=1}^K \lambda_i \left(P_i/\beta_i  \right)^{\frac{2}{\alpha}}}.
\end{align}
We use this analytical result for the the load factors in the coverage probability results derived in the paper to plot the analytical results as a function of the user density in Fig.~\ref{fig:SystemSim}. Comparing this result with the numerical result obtained form a detailed simulation, we note that the two are reasonably close, especially relative to the previously known results for the fully-loaded system. This validates the two assumptions mentioned in the starting of this discussion.

\begin{figure}[t!]
\centering
\includegraphics[width=\columnwidth]{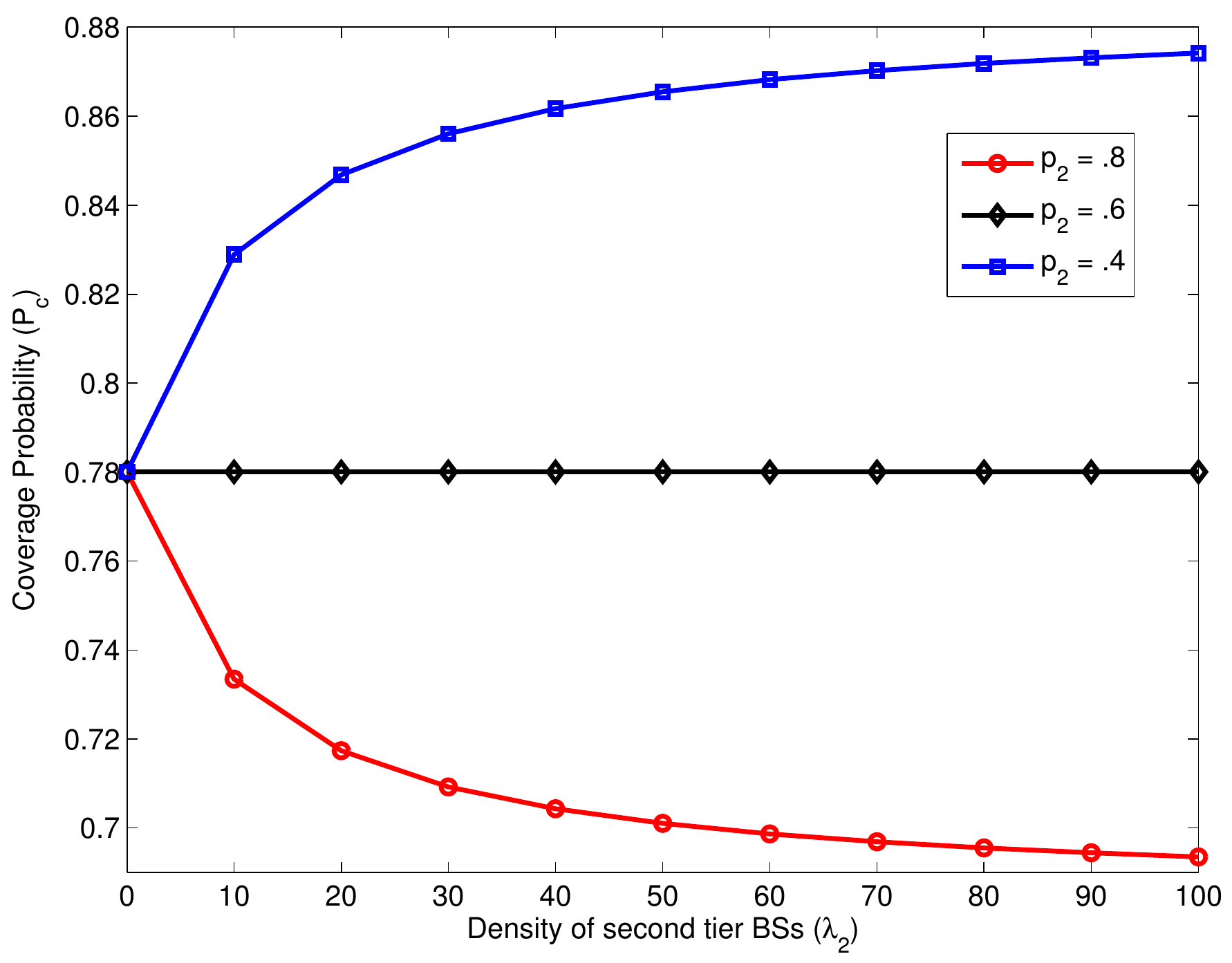}
\caption{Coverage probability in a two tier network as a function of $\lambda_2$ ($\T =[1, 1]$, $P = [1, .01]$, $\lambda_1 = 1$, $p_1 = .6$ and $\alpha = 3.8$).}
\label{fig:2tier_pc_L2}
\end{figure}

\subsection{Scale Invariance and Effect of Adding Small Cells}
We now consider a two tier system and plot the coverage probability as a function of the density of second tier for various BS activity factors in Fig.~\ref{fig:2tier_pc_L2}. The target $\sir$ is fixed to be the same for both the tiers. We first note that the network is invariant to the changes in density when $p_1 = p_2$ as discussed in the last section. More importantly, we note that the coverage probability increases with $\lambda_2$ when the second tier BSs are less active than the first tier. This is an important result from the perspective of small cells, which are generally less active than macrocell BSs. Therefore, the coverage probability of the network should increase with the addition of small cells in this regime. This is a strong rebuttal to the viewpoint that unplanned infrastructure might bring down a cellular network due to increased interference. On the other hand, if a tier of BSs is added which is more active than the macrocells, the coverage would decrease, although this case seems pretty unlikely given the high load handled by the macrocells.

\begin{figure}[t!]
\centering
\includegraphics[width=\columnwidth]{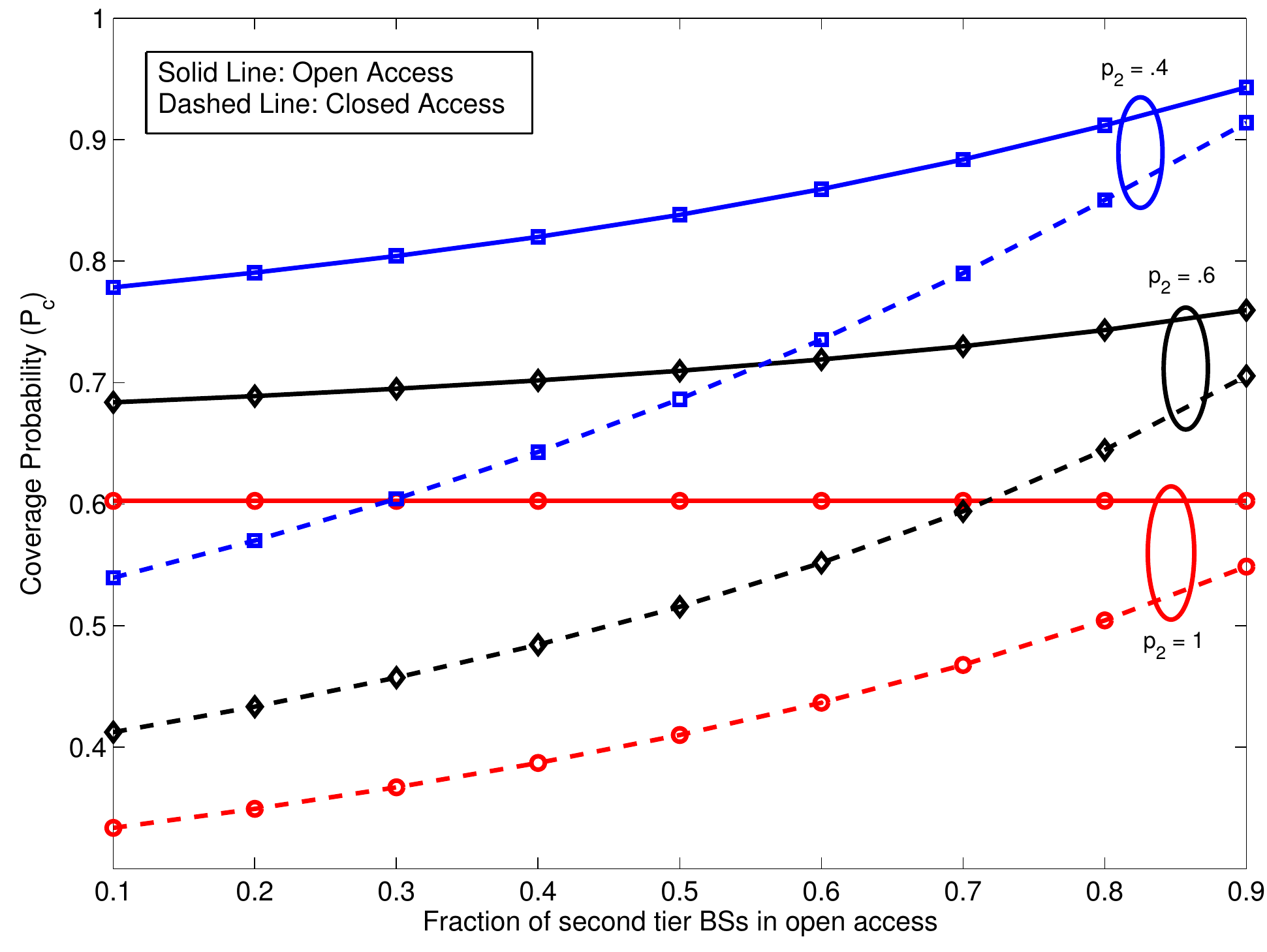}
\caption{Coverage probability in a two tier network as a function of the fraction of the second tier BSs in open access ($\T =[1, 1]$, $P = [1, .01]$, $\lambda_2^{(c)} = 10 \lambda_1$, $p_1 = 1$ and $\alpha = 3.8$). The density of second tier BSs in open access $\lambda_2^{(o)} = \frac{f}{1-f} \lambda_2^{(c)}$, where $f$ is the fraction of BSs in open access.}
\label{fig:OAvsCA}
\end{figure}

\subsection{Open vs Closed Access} So far in this section we have only studied open access networks, where a mobile user can access any BS in the network. We now study the effect of closed access on the coverage probability, with a focus on a particular scenario of interest where a certain fraction of BSs of a particular tier are in closed access while the others are in open access. This scenario is especially important in the current HCNs, where the access permissions of a particular small cell might be different for different set of users. It is easy to argue that this scenario can be visualized as a special case of the general model developed in this paper. To understand this, assume that a fraction $f$ of BSs of a certain tier are in open access -- we assume that a BS is in open or closed access independent of the other BSs. Therefore, the density of BSs in open access $\lambda_i^{(o)}$ and closed access $\lambda_i^{(c)}$ can be evaluated from the following two equations:
\begin{align}
f &= \frac{\lambda_i^{(o)}}{\lambda_i^{(o)} + \lambda_i^{(c)}}\\
\lambda_i &= \lambda_i^{(o)} + \lambda_i^{(c)}.
\end{align}
Now this tier can be divided into two tiers, one with density $\lambda_i^{(o)}$, which is in open access, and other with density $\lambda_i^{(c)}$, which is in closed access -- both form independent PPPs.

To study this scenario in detail, we consider a two tier HCN, where the first tier is in open access and fraction $1-f$ of BSs of the second tier is in closed access. For this scenario, the coverage probability as a function of $f$ is presented in Fig.~\ref{fig:OAvsCA} for various load scenarios. The results confirm the intuition that the gap in open and closed access results reduces when the value of $f$ is increased. More interestingly, the gap is smaller when the second tier BSs are lightly loaded. This implies that the effect of interference due to closed access small cells on coverage probability is negligible if there are enough small cells in open access.

\section{Conclusions}
In this paper, we have incorporated a flexible notion of BS load in random spatial models for $K$-tier HCNs by introducing a new idea of {\em conditionally thinning} the interference field, conditional on the connection of a typical mobile to its serving BS. We have shown that this conditional thinning is a natural way of modeling different levels of load on different types of BSs arising mainly from the differences in their coverage footprints. We observe that the fully loaded models are extremely pessimistic in terms of coverage, and the analysis shows that adding lightly loaded access points (e.g. pico or femtocells) to the macrocell network always increases coverage probability.
%In addition to providing insights about coverage trends and system design, this analysis also leads to an interesting idea about correcting the edge-bias of a typical mobile in random spatial models, which appears when a typical mobile is defined to be a randomly chosen point in $\R^2$.

This work has numerous extensions. Firstly, it is important to exactly model the temporal and spatial correlation in the BS activity factors and develop tools to incorporate it in the framework developed in this paper. Secondly, it is important to understand the coverage and rate trends for this conditional-thinning using more realistic spatial models that include inter-point interactions, such as Gibbs models~\cite{TayDhiC2012}. Another related idea is to define BS load from queuing perspective and then study spatial and temporal dynamics simultaneously. Since this work focused only on the downlink, an interesting extension is to develop a similar framework for uplink possibly using some of the ideas recently developed in~\cite{NovDhiJ2012}.

\appendices
\section{\label{app:Proof-of-Lemma-main}Proof of Lemma \ref{lem:one}:}
Being consistent with the definition of $\calT$, we note that:
\begin{equation} \calT^m=\nb1 \left(\max_{i\in \calK}\frac{M(\Psi_i)}{\delta_i}<I\right) I^{-2m/\alpha}.\end{equation}
To proceed with the proof, we represent $I^{-2m/\alpha}$ in terms of $\Gamma(x)$ as:
\begin{equation}I^{-2m/\alpha}= \frac{1}{\Gamma(2m/\alpha)}\int_0^\infty e^{-sI} s^{-1+\frac{2m}{\alpha}}\nrmd s, \ m\geq 1,\end{equation}
where $\Gamma(x)$ is the standard gamma function. Using this representation of $I^{-2m/\alpha}$ we can express $\E[\ncalT^m]$ as:
\begin{equation}
\E\left[\nb1 \left(\max_{i\in \calK}\frac{M(\Psi_i)}{\delta_i}<I\right)  \frac{1}{\Gamma\left(\frac{2m}{\alpha}\right)}\int_0^\infty e^{-sI} s^{-1+\frac{2m}{\alpha}}\nrmd s\right]. \end{equation}
Using Fubini's theorem, we can exchange the expectation and the inner integral to obtain
\begin{equation}
\frac{1}{\Gamma\left(\frac{2m}{\alpha}\right)}\int_0^\infty s^{-1+\frac{2m}{\alpha}}\E\left[ e^{-sI} \nb1 \left(\max_{i\in \calK}\frac{M(\Psi_i)}{\delta_i}<I\right)\right] \nrmd s.
\label{eq:ET1}
\end{equation}
Under the assumption $\beta_i > 1,\ \forall\ i$, we know that only one BS in the whole network can establish a downlink connection with a typical mobile. Hence,
\begin{equation}
\nb1 \left(\max_{i\in \calK}\frac{M(\Psi_i)}{\delta_i}>I\right) = \sum\limits_{i=1}^K \sum\limits_{x\in \Psi_i}\nb1\left(\sir(x) > \T_i \right),
\end{equation}
where $\sir(x)$ is the received $\sir$ when a typical mobile is camped to the BS located at $x\in \Psi_i$. Using this expression, the expectation term of \eqref{eq:ET1} can be written as:
\begin{align}
&\E \left[ e^{-sI} \nb1 \left(\max_i\frac{M(\Psi_i)}{\delta_i}<I\right) \right] \nonumber \\
&=\E \left[e^{-sI} \right]-\sum_{i=1}^K \E \left[ e^{-sI}\sum_{x\in \Psi_i}\nb1 (\sir(x)>\beta_i) \right].
\label{eq:ET2}
\end{align}
From Lemma~\ref{Lemma2}, we know the Laplace transform of total interference and hence the first term in the above expression can be directly written as:
\begin{equation} \E\left[ e^{-sI} \right]= \exp\left(-s^{2/\alpha} C(\alpha)\sum_{l=1}^K p_l\lambda_l P_l^{2/\alpha}\right). \label{eq:ET2_term1} \end{equation}
To evaluate the expectation in the second term of \eqref{eq:ET2}, we first denote the effective interference as $I' = I - P_i h_x \|x\|^{-\alpha}$ and note that the Laplace transforms of $I$ and $I'$ are the same. The expectation can now be simplified as:
\begin{align}
&\E \left[e^{-sI}\sum_{x\in \Psi_i}\nb1 (\sir(x)>\beta_i) \right] \nonumber \\
&= \E \left[\sum_{x\in \Psi_i} \exp(-sI' + P_i h_x \|x\|^{-\alpha}) \nb1 \left(\frac{P_i h_x \|x\|^{-\alpha}}{I'}>\beta_i \right) \right] \\
& \stackrel{(a)}{=} \E \left[\sum_{x\in \Psi_i}  e^{-sI'} \E_{h_x}\left[e^{-P_i h_x \|x\|^{-\alpha}}\nb1\left( h_x > \T_i I' P_i^{-1} \|x\|^{\alpha} \right) \right]   \right]\\
&\stackrel{(b)}{=} \E \left[\sum_{x\in \Psi_i} \frac{\E_{I'} \left[\exp(-I'(s(1+\beta_i)+\beta_iP_i^{-1}\|x\|^{\alpha}))\right]}{1+sP_i\|x\|^{-\alpha}}   \right],
\end{align}
where $(a)$ follows from the fact that fading is independent of all the other random variables and $(b)$ follows from the fact that $h_x \sim \exp(1)$. Now, using the Laplace transform of $I'$ and recalling $\eta = C(\alpha) \sum_{l=1}^K \lambda_l p_l P_l^{2/\alpha}$, it can be further simplified to:
\begin{equation}
\E \left[\sum_{x\in \Psi_i} \frac{\exp(-\eta (s(1+\beta_i)+\beta_iP_i^{-1}\|x\|^{\alpha})^\frac{2}{\alpha})}{1+sP_i\|x\|^{-\alpha}}  \right],
\end{equation}
and using Campbell Mecke theorem~\cite{StoKenB1995} to:
\begin{equation}
\lambda_ip_i\int_{\R^2} \frac{\exp(-\eta(s(1+\beta_i)+\beta_iP_i^{-1}\|x\|^{\alpha})^\frac{2}{\alpha})}{1+sP_i\|x\|^{-\alpha}} \nrmd x.
\label{eq:ET2_term2}
\end{equation}
With this we have now simplified both the terms of \eqref{eq:ET2} given respectively by \eqref{eq:ET2_term1} and \eqref{eq:ET2_term2}. We now substitute the first term in \eqref{eq:ET1} and evaluate the integral with respect to $s$ as:
\begin{align}
\int_0^\infty s^{-1+2m/\alpha}\exp\left(-\eta s^{2/\alpha} \right)\nrmd s
&= \frac{\eta^{-m}\alpha(m-1)!}{2},
\end{align}
where the solution follows from the substitution $s^{2/\alpha} \rightarrow y$ followed by integration by parts. Now substituting the second term (given by \eqref{eq:ET2_term2}) in \eqref{eq:ET1}, we get the following integral:
\begin{equation}
\lambda_ip_i\int_0^\infty\int_{\R^2}\frac{s^{-1+2m/\alpha}e^{-\eta(s(1+\beta_i)+\beta_iP_i^{-1}\|x\|^{\alpha})^\frac{2}{\alpha}}}{1+sP_i\|x\|^{-\alpha}}\nrmd x \nrmd s .
\end{equation}
Now use the substitution $(sP_i)^{-1/\alpha}x \to x$, which leads to
\begin{equation}
\lambda_ip_i\int_0^\infty\int_{\R^2}\frac{s^{-1+2m/\alpha}e^{-\eta s^\frac{2}{\alpha}((1+\beta_i)+\beta_i\|x\|^{\alpha})^\frac{2}{\alpha}}}{1+ \|x\|^{-\alpha}} (s P_i)^\frac{2}{\alpha}\nrmd x \nrmd s .
\end{equation}
Now exchange the integrals to obtain
\begin{equation}
\lambda_ip_iP_i^\frac{2}{\alpha}\int_{\R^2}\int_0^\infty\frac{s^{-1+\frac{2(m+1)}{\alpha}}e^{-\eta s^\frac{2}{\alpha}((1+\beta_i)+\beta_i\|x\|^{\alpha})^\frac{2}{\alpha}}}{1+ \|x\|^{-\alpha}} \nrmd s \nrmd x .
\end{equation}
Now the inner integral (with respect to $s$) can be evaluated directly using the definition of $\Gamma(x)$ function or using the substitution $s^{2/\alpha}\to s$ to obtain the below integral.
\begin{align}
\frac{\lambda_ip_iP_i^\frac{2}{\alpha}\alpha m!}{2\eta^{m-1}} \int_{\R^2}\frac{\nrmd x}{(1+\|x\|^{-\alpha})(1+\beta_i+\beta_i\|x\|^\alpha)^{\frac{2}{\alpha}(m+1)}}.
\end{align}
Now the above integral can be expressed as:
\begin{equation}
\frac{1}{(1+\T_i)^{\frac{2}{\alpha}(m+1)}} \int_{\R^2}\frac{\nrmd x}{(1+\|x\|^{-\alpha})(1+\frac{\T_i}{1+\T_i}\|x\|^\alpha)^{\frac{2}{\alpha}(m+1)}}
\end{equation}
Now using the substitution $1+\frac{\beta_i}{1+\beta_i}\|x\|^\alpha \to t^{-1}$, the above expression can be simplified to
\begin{align}
&\frac{2\pi\beta_i^{-2/\alpha}}{\alpha(1+\beta_i)^{2m/\alpha}}\frac{\Gamma(2m/\alpha)\Gamma(1+2/\alpha)}{\Gamma(1+(m+1)2/\alpha)}\nonumber \\
&{}_2F_1(1,2m/\alpha,1+(m+1)2/\alpha,(1+\beta_i)^{-1}),
\end{align}
where ${}_2F_1$ is the generalized hypergeometric function. Combining all the above we obtain the result. \hfill \IEEEQED

%\vspace{-7 mm}  % Added by HD
\begin{biography}
[{\includegraphics[width=1in,height=1.25in,clip,keepaspectratio]{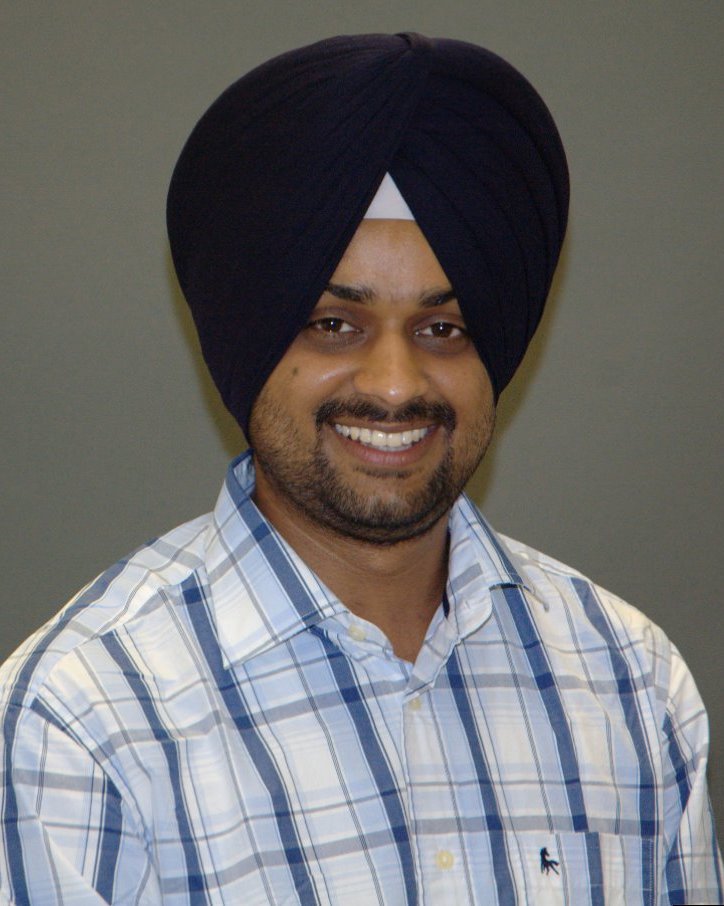}}]
{Harpreet S. Dhillon}
(S'11) received the B.Tech. degree in Electronics and Communication Engineering from IIT Guwahati, India, in 2008 and the M.S. in Electrical Engineering from Virginia Tech in 2010. He is currently a Ph.D. student at The University of Texas at Austin, where his research has focused on the modeling and analysis of heterogeneous cellular networks using tools from stochastic geometry, point process theory and spatial statistics. His other research interests include interference channels, multiuser MIMO systems and cognitive radio networks. He is the recipient of the Microelectronics and Computer Development (MCD) fellowship from UT Austin and was also awarded the Agilent Engineering and Technology Award 2008. He has held summer internships at Alcatel-Lucent Bell Labs in Crawford Hill, NJ, Samsung Dallas Technology Lab in Richardson, TX, Qualcomm Inc. in San Diego, CA, and Cercom, Politecnico di Torino in Italy.
\end{biography}

\begin{biography}
[{\includegraphics[width=1in,height=1.25in,clip,keepaspectratio]{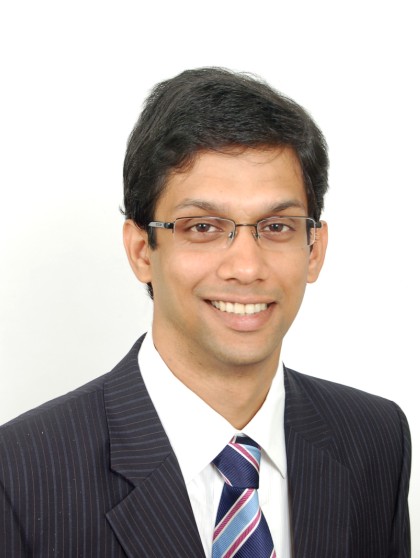}}]
{Radha Krishna Ganti}
(S'01, M'10) is an Assistant Professor at the Indian Institute of Technology Madras, Chennai, India. He was a Postdoctoral researcher in the Wireless Networking and Communications Group at UT Austin from 2009-11. He received his B. Tech. and M. Tech. in EE from the Indian Institute of Technology, Madras, and a Masters in Applied Mathematics and a Ph.D. in EE form the University of Notre Dame in 2009. His doctoral work focused on the spatial analysis of interference networks using tools from stochastic geometry. He is a co-author of the monograph {\em Interference in Large Wireless Networks} (NOW Publishers, 2008).
\end{biography}

\begin{biography}
[{\includegraphics[width=1in,height=1.25in,clip,keepaspectratio]{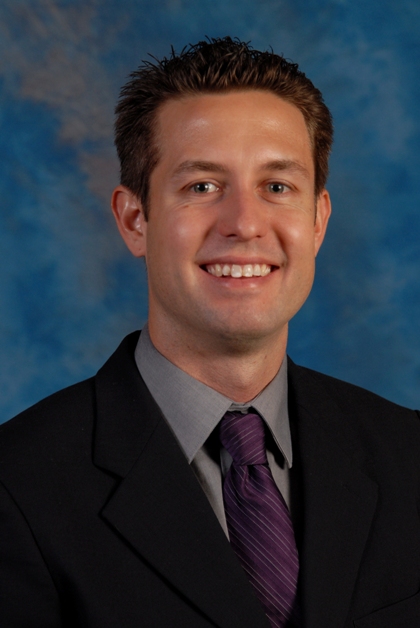}}]
{Jeffrey G. Andrews} (S'98, M'02, SM'06, F'13) received the B.S. in Engineering with High
Distinction from Harvey Mudd College in 1995, and the M.S. and Ph.D. in Electrical
Engineering from Stanford University in 1999 and 2002, respectively. He is a Professor in the Department of Electrical and Computer Engineering at the University of Texas at Austin, where he was the Director of the Wireless Networking and Communications Group (WNCG) from 2008-12. He developed Code Division Multiple Access systems at Qualcomm from 1995-97, and has consulted for entities including Verizon, the WiMAX Forum, Intel, Microsoft, Apple, Samsung, Clearwire, Sprint, and NASA. He is a member of the Technical Advisory Boards of
Accelera and Fastback Networks.

Dr. Andrews is co-author of two books, \emph{Fundamentals of WiMAX} (Prentice-Hall, 2007) and \emph{Fundamentals of LTE} (Prentice-Hall, 2010), and holds the Earl and Margaret Brasfield Endowed Fellowship in Engineering at UT Austin, where he received the ECE department's first annual High Gain award for excellence in research. He is a Senior Member of the IEEE, a Distinguished Lecturer for the IEEE Vehicular Technology Society, served as an associate editor for the \textsc{IEEE Transactions on Wireless Communications} from 2004-08, was the Chair of the 2010 IEEE Communication Theory Workshop, and is the Technical Program co-Chair of ICC 2012 (Comm. Theory Symposium) and Globecom 2014.  He is an elected member of the Board of Governors of the IEEE Information Theory Society and an IEEE Fellow.

Dr. Andrews received the National Science Foundation CAREER award in 2007 and has been co-author of five best paper award recipients, two at Globecom (2006 and 2009), Asilomar (2008), the 2010 IEEE Communications Society Best Tutorial Paper Award, and the 2011 Communications Society Heinrich Hertz Prize.  His research interests are in communication theory, information theory, and stochastic geometry applied to wireless cellular and ad hoc networks.
\end{biography}  \vfill

\end{document}